\documentclass[pdflatex,sn-mathphys-num]{sn-jnl}


\usepackage{graphicx}%
\usepackage{multirow}%
\usepackage{amsmath,amssymb,amsfonts}%
\usepackage{amsthm}%
\usepackage{mathrsfs}%
\usepackage[title]{appendix}%
\usepackage{xcolor}%
\usepackage{textcomp}%
\usepackage{manyfoot}%
\usepackage{booktabs}%
\usepackage{algorithm}%
\usepackage{algorithmicx}%
\usepackage{algpseudocode}%
\usepackage{listings}%
\usepackage{bbm}

\raggedbottom
\theoremstyle{plain}
\newtheorem{thm}{Theorem}[section]
\newtheorem{lem}[thm]{Lemma}

\newtheorem{cor}[thm]{Corollary}

\theoremstyle{definition}
\newtheorem{defn}{Definition}[section]

\newtheorem*{thm*}{Theorem}
\theoremstyle{remark}
\newtheorem*{rem}{Remark}


%

%
%

%

\raggedbottom
\usepackage{xcolor}
\usepackage{caption}
\usepackage{subcaption}

\begin{document}

\title[Article Title]{Robust Hedging of path-dependent options using a \textit{min-max} algorithm}


\author[1]{\fnm{Purba} \sur{Banerjee}}\email{purbab@iisc.ac.in}
\equalcont{These authors contributed equally to this work.}

\author[1]{\fnm{Srikanth} \sur{Iyer}}\email{skiyer@iisc.ac.in}
\equalcont{These authors contributed equally to this work.}

\author*[2]{\fnm{Shashi } \sur{Jain}}\email{shashijain@iisc.ac.in}
\equalcont{These authors contributed equally to this work.}

\affil[1]{\orgdiv{Department of Mathematics}, \orgname{Indian Institute of Science}, \orgaddress{\city{Bangalore}, \postcode{560012},  \country{India}}}

\affil*[2]{\orgdiv{Department of Management Studies}, \orgname{Indian Institute of Science}, \orgaddress{\city{Bangalore}, \postcode{560012},  \country{India}}}


\abstract{We consider an investor who wants to hedge a path-dependent option with maturity $T$ using a static hedging portfolio using cash, the underlying, and vanilla put/call options on the same underlying with maturity $ t_1$, where $0 < t_1 < T$. We propose a model-free approach to construct such a portfolio. The framework is inspired by the \textit{primal-dual} Martingale Optimal Transport (MOT) problem, which was pioneered by \cite{beiglbock2013model}. The optimization problem is to determine the portfolio composition that minimizes the expected worst-case hedging error at $t_1$ (that coincides with the maturity of the options that are used in the hedging portfolio). The worst-case scenario corresponds to the distribution that yields the worst possible hedging performance. This formulation leads to a \textit{min-max} problem. We provide a numerical scheme for solving this problem when a finite number of vanilla option prices are available. Numerical results on the hedging performance of this model-free approach when the option prices are generated using a \textit{Black-Scholes} and a \textit{Merton Jump diffusion} model are presented. We also provide theoretical bounds on the hedging error at $T$, the maturity of the target option. }     

\keywords{Martingale optimal transport, Robust hedging, Static hedging, Min-Max Optimization}



\maketitle


\section{Introduction}\label{sec:Introduction}
Derivative pricing and hedging form a crucial part of the financial world. A fundamental step in derivative pricing involves modeling the underlying stock price process under certain predetermined assumptions. A common practice is to calibrate the parameters of the chosen stock price model to the market prices of actively traded vanilla options. Since the choice of the model for the calibration process is not necessarily unique,  under unexpected market scenarios, the predicted stock prices under the chosen model can deviate significantly from the true stock price. This will also result in non-unique prices for the same derivative security, depending on the choice of the model. Hence, from an investor's perspective, hedging their position in the derivative is necessary to protect against unpredictable price fluctuations. An investor can choose a dynamic or a static hedging approach to construct a hedging portfolio. The advantage of not requiring constant monitoring of the market fluctuations and incurring transaction costs each time the underlying hedging portfolio is rebalanced for a dynamic strategy makes static hedging attractive.

In \cite{carr2014static}, the authors obtain a static hedging portfolio of \textit{European} call options with maturity $0<t_1$ to hedge a \textit{European} call option with a longer maturity $T$ when the stock price process satisfies a one-factor Markovian dynamics. In \cite{banerjeemultiperiod}, the authors extend this work to multiple time points. This static hedging approach yields a superior performance than a standard delta hedging approach for a jump-diffusion process like a Merton Jump Diffusion model, where a sudden jump in the stock price process in between the rebalancing times can lead to a considerable gap between the target option with maturity $T$ and the delta hedging portfolio. However, under stochastic volatility models, this static hedging approach performs poorly. We consider a natural extension of this static hedging problem to the robust model-free setting. 

To provide a general overview of our problem, we consider an investor who holds a short position in an option with pay-off $c(X, Y)$ at maturity $T$ where $X$ and $Y$ denote the underlying stock price process at times $t_1,T$ with $0<t_1<T$. The investor wishes to construct a portfolio comprising cash, stocks, and vanilla put/call options on the same underlying asset with maturity $t_1$ to hedge this position. We denote this hedging portfolio as $h(X,w)$ where $w$ denotes the weights of the components, i.e., options, stocks, and cash. A negative value of a component of $w$ denotes a short position in the corresponding asset (options, stocks, or cash) and a positive value denotes a long position. We assume the availability of market prices of the traded call and put options corresponding to different maturities. If the call/put prices are available for all strikes over $[0,\infty)$, a result of \cite{breeden1978prices} then allows for recovery of the marginal distribution of the stock price and is independent of any underlying model assumption. The resulting marginals will be consistent with the available call/put option prices. Then, under the no-arbitrage condition, any pricing measure will be a martingale measure with these marginals. Let $\mathcal{M}(\mu,\nu)$ denote the set of probability measures $\mathbb{P}$ with marginals at times $t_1, T$ given by $\mu, \nu$ respectively and satisfying the martingale condition $\mathbb{E}_{\mathbb{P}}[Y|X]=X$. The marginals $(\mu,\nu)$ denote the true marginal distributions of the underlying stock price process. For simplicity, we assume the rate of borrowing/lending to be zero. Now, if the investor wants to compute the weights with respect to which the worst-case hedging error at maturity $T$ is minimized, then the objective function reduces to
\begin{align}
\label{generalized_objective}
    \begin{split}
       P^{H_T}(\mu,\nu)&:=\inf_{w}\sup_{\mathbb{P}\in\mathcal{M}(\mu,\nu)}\mathbb{E}_{\mathbb{P}}\bigg[\bigg|c(X,Y)-h(X,w)\bigg|\bigg].
    \end{split}
\end{align}
 The inner maximization problem in ($\ref{generalized_objective}$) is a \textit{martingale optimal transport} (MOT) problem.

From an investor's perspective, it is more important to compute the weights corresponding to the worst-case hedging error at $t_1$ (the maturity of the options in the hedging portfolio). The objective function in this case is
\begin{align}
\label{generalized_objective_at_t1}
    \begin{split}
       P^{H_{t_1}}(\mu,\nu)&:=\inf_{w}\sup_{\mathbb{P}\in\mathcal{M}(\mu,\nu)}\mathbb{E}_{\mu}\bigg[\bigg|\mathbb{E}_{\mathbb{P}}[c(X,Y)|X]-h(X,w)\bigg|\bigg]. 
    \end{split}
\end{align}
The inner maximization problem in (\ref{generalized_objective_at_t1}) can be considered as a modified MOT problem, which we denote by Mod-MOT .

The MOT problem for obtaining robust bounds on option prices was pioneered by \cite{beiglbock2013model} and followed by the works of \cite{galichon2014stochastic},\cite{dolinsky2014martingale},\cite{beiglbock2017complete},\cite{hou2018robust},\cite{lutkebohmert2019tightening},\cite{guo2019computational},\cite{bauerle2019martingale},\cite{eckstein2021robust},\cite{bauerle2021consistent} to name a few. 

 Given  the risk-neutral marginal probability distributions $\mu,\nu$ on $\mathbb{R}$ and a measurable cost function $c:\mathbb{R}^2\rightarrow\mathbb{R}$, in a classical optimal transport problem, the objective is to maximize (minimize) 
\begin{align}
     \label{OT}
        \int_{\mathbb{R}^2}c(x,y)\mathbb{P}(dx,dy).
\end{align}
The optimization is over all probability measures $\mathbb{P}$, under the constraints that the marginals of $\mathbb{P}$ are predefined distributions $\mu,\nu$ satisfying 
\begin{align}
    \label{marginal}
    \mathbb{P}(E\times \mathbb{R})=\mu(E)\hspace{0.1cm} \text{and}\hspace{0.1cm} \mathbb{P}(\mathbb{R}\times E)=\nu(E).
\end{align}
 
For a detailed overview of the study of optimal transportation problems, we refer the reader to \cite{villani2021topics}. If we require that $\mathbb{P}$ satisfy an additional martingale constraint \begin{align}
\label{MOT}
    \int_{\mathbb{R}}y\mathbb{P}(dy|x)=x,
\end{align} where $\mathbb{P}(dy|x)$ denotes the conditional distribution of the random variable $Y$ given $X$, then  (\ref{OT})-(\ref{MOT}) is termed the \textit{martingale optimal transport} (MOT) problem. This solution corresponds to an upper (lower) price bound for an option with payoff $c$.

In practice, we have actively traded call prices available only at a finite number of strikes. It is then possible to find discrete measures consistent with observed call prices. When the measures $\mu$ and $\nu$ are discrete, i.e., $\mu(dx)=\sum_{i=1}^{m}\alpha_i\delta_{x_i}(dx)$ and $\nu(dy)=\sum_{j=1}^{n}\beta_j\delta_{y_j}(dy)$, the MOT problem (\ref{OT})-(\ref{MOT}) reduces to a \textit{linear programming} (LP) problem. The LP problem is given by

\begin{align}
\label{MOT_discrete}
\begin{split}
   & \max_{(p_{i,j})_{\in \mathbb{R}_{+}^{mn}}}\sum_{i=1}^{m}\sum_{j=1}^{n}p_{i,j}c(x_i,y_j)\hspace{0.2cm}\text{subject to}\\
   &\sum_{j=1}^{n}p_{i,j}=\alpha_i,
    \sum_{i=1}^{m}p_{i,j}=\beta_j,
    \sum_{j=1}^{n}p_{i,j}y_j=\alpha_i x_i,\hspace{0.2cm}\text{for}\hspace{0.2cm} i=1,..,m;\hspace{0.2cm} j=1,..,n.
\end{split}
\end{align}
One can utilize the iterative Bregman projection to solve the LP, as shown in \cite{benamou2015iterative}.

An important observation is that the maximization problem in the objective function (\ref{generalized_objective}) also reduces to an LP problem for discretely supported marginals $(\mu,\nu)$ and the problem (\ref{generalized_objective}) can be viewed as a \textit{min-max} problem. For the maximization problem in (\ref{generalized_objective_at_t1}), one needs to make certain modifications to transform it into a linear problem for discretely supported marginals $(\mu,\nu)$.
 This is explained in greater detail in Section \ref{Section:Hedging problem}. \cite{davis2014arbitrage} pioneered this LP approach for an MOT problem where a finite number of expectation constraints were provided instead of the marginal constraint $\nu$. For a convex reward function, this yields optimizers with finite support.
 
In a real-world scenario, the true marginals $(\mu,\nu)$ of the underlying stock price process at times $0<t_1<T$ are unknown. Given that only finitely many call option prices are available, one cannot directly use the results in \cite{breeden1978prices} to obtain the true marginal distributions consistent with the call option prices. An alternative approach would be to approximate the solution of the original MOT problem (\ref{OT})-(\ref{MOT}) using an  LP problem of the form (\ref{MOT_discrete}) for discretely supported marginal distributions which are consistent with the available call prices. We need to ensure that the solution of the MOT problem obtained by solving the LP is close to the value obtained from the true underlying as call prices become available over an increasingly dense set of strikes. In \cite{bauerle2021consistent}, the authors prove that if the option’s payoff function $c$ is directionally convex, the optimization over all discrete measures reduces to those with marginals as described below. These measures stochastically dominate all other discrete measures consistent with the observed call prices. A major drawback is that the result is restricted to cases where the true marginals are discrete with compact support. Since the true underlying distribution is unknown, assuming it to have compact support would not be ideal. An extension of their discretization scheme for unbounded measures and relevant convergence results are provided in \cite{schmithals2019contributions}, which form a basis for our work. For unbounded measures, a different approach is provided in \cite{guo2019computational} where the authors introduce an $\epsilon-$ relaxation approach to obtain a sequence of relaxed MOT problems. Their main result provides conditions to ensure the convergence of a given sequence of relaxed MOT problems with discrete marginals to the actual MOT problem (\ref{OT})-(\ref{MOT}) with continuous marginals. They prove the result under some moment assumptions on the true measure. The choice of $\epsilon$  depends on the Wasserstein distance between the sequence of discretely supported marginals and the true marginal distributions.

 In related literature, one of the earliest works in model-independent option pricing can be found in \cite{hobson1998robust}, where the author focuses on obtaining model-independent bounds on the price of a lookback option by formulating it as a Skorokhod embedding problem. \cite{brown2001robust}, \cite{cox2011robust},\cite{beiglbock2020geometry}, and the references therein provide detailed insights into the applications of Shorokhod embedding techniques for robust pricing and hedging of derivatives. 
 
In order to ensure that the discrete and the true underlying marginals are consistent with the available call prices with maturities $0<t_1<T$. The discretization approach in \cite{schmithals2019contributions} provides marginals consistent with the call prices. To obtain an idea about how close the MOT problem with the discretely supported marginals would be to the value of the MOT problem for the true underlying measure, we use the convergence results from \cite{schmithals2019contributions}. While the problem (\ref{generalized_objective_at_t1}) is not a standard MOT problem, we solve the sequence of linear problems that one obtains for the discretely supported marginals, similar to the problem (\ref{generalized_objective}), and compare the hedging performance with $(\ref{generalized_objective})$.

To the best of our knowledge, the standard approach in literature is to view the hedging problem corresponding to the target option with payoff $c(X,Y)$ as the dual problem to the primal problem MOT problem (\ref{OT})-(\ref{MOT}), starting from the pioneering approach in \cite{beiglbock2013model}. The dual approach to the problem (\ref{OT})-(\ref{MOT}) is a more generalized version of a semi-static hedging problem over a set of functions that can be thought of as call/put options with maturities $0<t_1<T$ and dynamic positions in the shares rebalanced at $t_1$ and $T$. This is explained in greater detail in Section \ref{Sec:Convergence results}.

For a higher dimensional stock price process where the resulting MOT problem is known as a \textit{multi-marginal martingale optimal transport} (MMOT)  problem, and even for the case when the pay-off depends on the asset price at more than two time-points, solving the dual problem to the primal pricing problem provides a significant reduction in the computational cost. The MMOT problem was introduced in \cite{lim2016multi} and the reader can also refer to \cite{henry2016maximum}, \cite{cox2019root}, \cite{beiglbock2020geometry}, \cite{nutz2020multiperiod}, \cite{eckstein2021robust} for studies on multi-marginal problems. 

Given that the investor already knows which call/put options they want to include in their portfolio $h(X,w)$, \textbf{our hedging problem (\ref{generalized_objective}) (respectively (\ref{generalized_objective_at_t1})) allows the investor to compute the weights of the components of the hedging portfolio (restricted to cash, stock and options maturing at $t_1$) that minimizes the worst possible expected error at time $T$ (respectively $t_1$)}. Under any unforeseen situation thrown by nature, the hedging error of this portfolio should be bounded above by the value of (\ref{generalized_objective}). The approach of viewing the hedging errors as (\ref{generalized_objective})-(\ref{generalized_objective_at_t1}) for two time-points $0<t_1<T$ also reduces to a \textit{min-max} problem for the case when the underlying marginal distributions are discretely supported, which could be of independent interest to the reader.

The main contributions of our paper are :
\begin{enumerate}
\item Extend the static hedging problem in \cite{carr2014static} to a model-free framework to construct a static hedging portfolio of cash, stock, and short-maturity \textit{European} call options to hedge a longer-maturity target \textit{European} call or a path-dependent option.
     \item Obtain the worst possible bounds on the hedging error while minimizing with respect to the portfolio weights by formulating the maximization problem as a modified version of the MOT problem.
     \item Prove the theoretical convergence of the min-max problem for the discrete marginals to the inf-sup problem (\ref{generalized_objective}) in the continuous case. 
     \item Formulate the corresponding max-min optimization problem for (\ref{generalized_objective_at_t1}), explain the utility of this approach, and compare it with the results for (\ref{generalized_objective_at_t1}). 
     \item Compare the worst-case hedging error at short-maturity $t_1$ for the hedging portfolio (having options with maturities $t_1$ and $T$) obtained using the standard dual problem of the primal problem MOT problem (\ref{OT})-(\ref{MOT}) with the hedging portfolio obtained using our approach in (\ref{generalized_objective_at_t1}). However, our hedging portfolio in this case has options with maturities $t_1$ as well as $T$, similar to the dual problem.  
\end{enumerate}

The outline of the paper is as follows: Section \ref{section:Framework} introduces the notations and important results from \cite{schmithals2019contributions} and \cite{bauerle2021consistent}, giving the background for the MOT problem. We discuss the associated numerical schemes from \cite{schmithals2019contributions} and \cite{bauerle2021consistent} to obtain the discretely supported marginal distributions of the stock price process and their results on the corresponding Wasserstein distance between the discrete marginal distributions and the true underlying marginals. Our problem at hand is explained in greater detail in Section \ref{Section:Hedging problem}, and the alternative \textit{max-min} formulation and its financial interpretation are given. The dual problem and associated convergence results for our problem (\ref{generalized_objective}) are provided in Section \ref{Sec:Convergence results}. In Section \ref{Sec:Numerical examples}, numerical examples are provided to test the efficiency of the numerical scheme and the associated upper bounds for both the pricing of options and the hedging problem at hand. Section \ref{Sec:Conclusion} gives the conclusion and discussions on possible future work.

\section{Framework and Preliminaries}\label{section:Framework}
We begin with a financial market with one risk-free asset, referring to the cash deposited in a bank account, and one risky asset, $S_t$, denoting the stock price path. Let $0=t_0<t_1<...<t_N=T$ denote the time points at which the stock price process needs to be evaluated for obtaining the marginal distributions, with $T$ being the final time point. We follow the terminologies and results in \cite{schmithals2019contributions} and \cite{bauerle2021consistent} to ensure consistency. 

It is assumed that the risk-free asset pays no interest, $r=0$, and the risky asset with price process $(S_t)$ is denoted by $(S_{t_1}, S_{t_2},...S_{t_N})$ with initial value $S_0 = 1$. Throughout this paper, for simplicity, we consider only two time points, i.e., $N=2$.

Following standard conventions, the random variables $(S_{t_1}, S_T)$, denoted henceforth by $(X,Y)$, take only non-negative values and are defined on a probability space $(\Omega,\mathcal{F},\mathbb{P})$. \textbf{There are no underlying model assumptions on the stock price process, but the market is always assumed to be free of arbitrage.} This guarantees the existence of a risk-neutral measure for the underlying stock price process by the First Fundamental Theorem of Asset Pricing (\cite{shreve2004stochastic}).

Restricting to the case of $N=2$ time-points,  let $C_{t_1}(k), C_T(k)$ denote the prices of \textit{European} call options (written on the underlying stock price process) at the initial time $t_0$, with strike price $k\in \mathbb{R}_{+}$ and maturities $t_1,t_2=T$ respectively.

If  $P(\mathbb{R}_{+})$ denotes the set of all probability measures on $\mathbb{R}_{+}$, then the First Fundamental Theorem of Asset Pricing ensures the existence of a measure $\mu\in P(\mathbb{R}_{+})$ which satisfies
\begin{align}
    C(k)=\int (x-k)^{+}\mu(dx),\hspace{0.5mm} k\geq 0
\end{align}
and the associated risk-neutral distribution by \cite{breeden1978prices} is
\begin{align}
\label{BL hint}
    \mu((-\infty,x])=1+C'(x+),x\in\mathbb{R}.
\end{align}

\subsection{Convex order and associated properties}
 
We first introduce what one means by the convex ordering of two measures and state the associated results. Lemmas (\ref{convex1}) and (\ref{convex2}) highlight the close relationship between the convex ordering of the underlying marginal distributions and the martingale property (\ref{MOT}), and also with the pricing functions under consideration.
\begin{defn}
    Two measures $\mu,\nu$ on $\mathbb{R}$ are said to be in convex order, denoted by $\mu\leq_{c}\nu$, if for any convex function $f:\mathbb{R}\rightarrow\mathbb{R}$ such that the integrals exist
    \begin{align}
        \int_{\mathbb{R}}f(x)\mu(dx)\leq \int_{\mathbb{R}}f(x)\nu(dx).
    \end{align}
\end{defn}

The following result by \cite{strassen1965existence} relates the convex ordering of measures with the martingale property of the associated random variables.
\begin{lem}
\label{convex1}
    Suppose $\mu,\nu \in P(\mathbb{R}_{+}). $ Then $\mu \leq_{c}\nu$ is equivalent to the existence of a probability space $(\Omega,\mathcal{F},\mathbb{P})$ and non-negative random variables $X,Y$ on it such that $X$ has distribution $\mu$ and $Y$ has distribution $\nu$ and $X=\mathbb{E}[Y|X]$.
\end{lem}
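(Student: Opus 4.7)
The plan is to prove the two implications separately, with the forward direction ($\mu\leq_c\nu \Rightarrow$ martingale coupling exists) being the substantive one.

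First, I would dispose of the easy direction ($\Leftarrow$). Suppose there exist $X,Y$ on some probability space with $X\sim\mu$, $Y\sim\nu$ and $X=\mathbb{E}[Y|X]$. For any convex $f:\mathbb{R}\to\mathbb{R}$ for which the integrals exist, the conditional Jensen inequality gives
\begin{equation*}
f(X) \;=\; f\bigl(\mathbb{E}[Y\mid X]\bigr) \;\leq\; \mathbb{E}[f(Y)\mid X] \quad \text{a.s.}
\end{equation*}
Taking expectations yields $\int f\,d\mu = \mathbb{E}[f(X)] \leq \mathbb{E}[f(Y)] = \int f\,d\nu$, which is exactly $\mu\leq_c\nu$.

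For the nontrivial direction ($\Rightarrow$), I would argue by a Hahn--Banach / duality argument in the spirit of Strassen. Let $\Pi(\mu,\nu)$ denote the set of probability measures on $\mathbb{R}_+^2$ with marginals $\mu,\nu$; this is a convex, weak-$*$ compact set (after reducing to a suitable Polish framework, e.g.\ using the fact that tightness is automatic from fixed marginals). Within $\Pi(\mu,\nu)$, the subset $\mathcal{M}(\mu,\nu)$ of martingale couplings is defined by the linear constraints $\int (y-x)\,\varphi(x)\,\mathbb{P}(dx,dy)=0$ for $\varphi$ bounded continuous, hence it is weak-$*$ closed and convex. The task is to show $\mathcal{M}(\mu,\nu)\neq\emptyset$. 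Suppose for contradiction that it is empty. By a separation theorem, there exist bounded continuous functions $\varphi_1,\varphi_2$ on $\mathbb{R}_+$ and a bounded continuous $\psi$ on $\mathbb{R}_+$ such that the functional $(x,y)\mapsto \varphi_1(x)+\varphi_2(y)+\psi(x)(y-x)$ is strictly negative in integral against every measure in $\Pi(\mu,\nu)$ while being nonnegative pointwise (up to standard approximation). Dualizing the marginal constraints, this produces, after optimizing over $\psi$, a concave upper envelope whose integral against $\nu$ is strictly less than its integral against $\mu$. Rewriting this envelope in terms of a convex function $f$ (by negation) yields $\int f\,d\mu > \int f\,d\nu$, contradicting $\mu\leq_c\nu$. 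The standing hypothesis $\mu,\nu\in P(\mathbb{R}_+)$ together with $\mu\leq_c\nu$ automatically forces equal means (apply convex order to $f(x)=x$ and $f(x)=-x$), which is a necessary prerequisite for any martingale coupling to exist.

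The main obstacle will be the separation step: one needs to justify the duality (strong duality, not just weak duality) between the primal feasibility problem for $\mathcal{M}(\mu,\nu)$ and the dual price-function problem, and to guarantee that the separating functional can indeed be represented in the form $\varphi_1(x)+\varphi_2(y)+\psi(x)(y-x)$ so that the contradiction against convex order is clean. The cleanest route is to work with compactly supported approximations $\mu_n,\nu_n$ of $\mu,\nu$ (preserving convex order), where the space of couplings is compact and Hahn--Banach applies directly, construct martingale couplings $\pi_n\in\mathcal{M}(\mu_n,\nu_n)$, and then pass to the limit via tightness of $\{\pi_n\}$ (tightness of the marginals transfers to tightness of couplings) and closedness of the martingale constraint under weak convergence. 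Once a coupling $\pi\in\mathcal{M}(\mu,\nu)$ is obtained, disintegrating $\pi(dx,dy)=\mu(dx)\,\kappa(x,dy)$ gives the kernel $\kappa$ from which one constructs $X\sim\mu$ and $Y$ conditionally on $X$ via $\kappa(X,\cdot)$, completing the required probability space.
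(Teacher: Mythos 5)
The paper does not prove this lemma at all: it is quoted verbatim from Strassen (1965) as a known result, so there is no in-paper argument to compare against. Your backward direction is complete and correct --- conditional Jensen plus taking expectations is exactly the standard argument and needs nothing more. Your forward direction is a sketch of the standard duality proof of (the two-marginal, one-dimensional case of) Strassen's theorem, and the overall strategy is the right one, but as written it contains a step that cannot work in the form you state it.

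Specifically, the separating functional you describe is internally inconsistent: a function of the form $\varphi_1(x)+\varphi_2(y)+\psi(x)(y-x)$ that is \emph{pointwise nonnegative} cannot have \emph{strictly negative} integral against every $\mathbb{P}\in\Pi(\mu,\nu)$, since pointwise nonnegativity already forces the integral to be nonnegative. The correct move is to separate the point $0$ from the convex weak-$*$ compact image of $\Pi(\mu,\nu)$ under the martingale-defect map $\mathbb{P}\mapsto\bigl(A\mapsto\int_{A\times\mathbb{R}_+}(y-x)\,\mathbb{P}(dx,dy)\bigr)$, which yields a single bounded $\psi$ with $\int\psi(x)(y-x)\,d\mathbb{P}\geq\varepsilon>0$ for all couplings; Kantorovich duality for the cost $\psi(x)(y-x)$ then gives $\varphi_1,\varphi_2$ with $\varphi_1(x)+\varphi_2(y)\leq\psi(x)(y-x)$ and $\int\varphi_1\,d\mu+\int\varphi_2\,d\nu>0$, and setting $\Phi(y):=\inf_x\{\psi(x)(y-x)-\varphi_1(x)\}$ (concave, with $\varphi_2\leq\Phi$ and $\varphi_1\leq-\Phi$) produces $\int\Phi\,d\nu>\int\Phi\,d\mu$, i.e.\ the convex function $-\Phi$ violates $\mu\leq_c\nu$. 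Two further points need care: (i) the weak-$*$ closedness of the martingale constraint uses that the test integrands $\varphi(x)(y-x)$ are unbounded, so you must invoke the uniform integrability of $y$ over $\Pi(\mu,\nu)$ (which holds because the second marginal is pinned to $\nu$ with finite mean); (ii) your fallback route via compactly supported approximations $\mu_n\leq_c\nu_n$ is not automatic --- naive truncation destroys both the equal-means condition and the convex order, so you would need an explicit mass-rearrangement (e.g.\ via the potential functions $u_\mu(x)=\int|x-y|\,\mu(dy)$) to produce approximants that remain in convex order. With these repairs the argument goes through, but as proposed the separation step is the genuine gap.
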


Lemma \ref{convex2} provides a relationship between the convex ordering and the associated values of the call price functions. 
\begin{lem}
\label{convex2}
    Let $\mu,\nu\in P(\mathbb{R}_{+})$ and denote by $C_{\mu}$ and $C_{\nu}$ the respective consistent pricing functions. Suppose that $\int x\mu(dx) =\int x \nu(dx)=1.$ Then $\mu\leq_{c}\nu$ is equivalent to $C_{\mu}\leq C_{\nu}.$
\end{lem}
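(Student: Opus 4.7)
The plan is to prove the two implications separately. The forward direction is essentially immediate: for every fixed $k \geq 0$, the call payoff $x \mapsto (x-k)^+$ is a convex function on $\mathbb{R}_+$ (and is integrable against both $\mu$ and $\nu$ since both have finite first moment). Applying the definition of convex order to this one-parameter family yields $C_\mu(k) = \int (x-k)^+ \mu(dx) \leq \int (x-k)^+ \nu(dx) = C_\nu(k)$ for every $k \geq 0$, i.e.\ $C_\mu \leq C_\nu$. Note that the mean condition is not needed here.

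For the reverse direction, the key tool is the integral representation of a convex function in terms of call payoffs. I would first handle the case of a $C^2$ convex function $f \colon \mathbb{R}_+ \to \mathbb{R}$ with compactly supported second derivative. A direct integration by parts gives
\begin{equation*}
f(x) \;=\; f(0) + f'(0)\,x + \int_0^\infty (x-k)^+ f''(k)\,dk, \qquad x \geq 0.
\end{equation*}
Integrating against $\mu$ and using Fubini together with the hypothesis $\int x\,\mu(dx)=1$,
\begin{equation*}
\int f\,d\mu \;=\; f(0) + f'(0) + \int_0^\infty C_\mu(k)\, f''(k)\,dk,
\end{equation*}
and an identical identity holds for $\nu$. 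Subtracting,
\begin{equation*}
\int f\,d\nu - \int f\,d\mu \;=\; \int_0^\infty \bigl(C_\nu(k) - C_\mu(k)\bigr)\,f''(k)\,dk \;\geq\; 0,
\end{equation*}
because $f''\geq 0$ by convexity of $f$ and $C_\nu - C_\mu \geq 0$ by hypothesis. This proves the desired inequality $\int f\,d\mu \leq \int f\,d\nu$ for this class of test functions.

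To pass from smooth, compactly-supported-second-derivative convex functions to arbitrary convex $f$ with finite integrals against $\mu$ and $\nu$, I would use a standard approximation: every convex function on $\mathbb{R}_+$ is a pointwise limit of an increasing sequence of smooth convex functions (e.g.\ via mollification combined with truncation of the support of the distributional second derivative). One then splits $f$ into its affine part plus a non-negative convex remainder and applies monotone convergence on the remainder, and integrates the affine part exactly using the shared first moment. The put-call-parity identity $(k-x)^+ = (x-k)^+ - x + k$ can be used to extend the representation to include puts if needed.

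The main obstacle is the approximation step: one must exhibit an increasing sequence $f_n \uparrow f$ of smooth convex functions that is uniformly dominated by an integrable convex envelope (or for which monotone convergence applies directly), while preserving the second-derivative representation and the exchange with the integrals $\int \cdot\, d\mu$ and $\int \cdot\, d\nu$. For convex $f$ that is not bounded below, one handles this by writing $f = \ell + g$ where $\ell$ is affine and $g \geq 0$ convex; the affine part is treated via the mean condition, and $g$ via monotone convergence. Everything else is a routine consequence of the explicit integration-by-parts formula above.
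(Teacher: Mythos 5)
Your argument is correct, but there is no in-paper proof to compare it against: the paper states this lemma without proof, recalling it from the cited literature as a classical characterization of convex order for measures on $\mathbb{R}_+$ with equal finite means. Your two directions are the standard ones. The forward implication is exactly as you say (only finiteness of the first moments is needed, not their equality). The reverse implication via the representation of a convex function as an affine function plus a mixture of call payoffs is also the standard route, and the equal-means hypothesis is used precisely where you use it, to integrate the affine part identically against $\mu$ and $\nu$. Two small refinements would tighten the approximation step you flag as the main obstacle. First, mollification is unnecessary: writing the distributional second derivative of $f$ as a nonnegative measure $\gamma$ and setting $g_n(x)=\int_{[0,n]}(x-k)^+\,\gamma(dk)$ already yields an increasing sequence of admissible test functions with $\int g_n\,d\mu=\int_{[0,n]}C_\mu\,d\gamma\leq\int_{[0,n]}C_\nu\,d\gamma=\int g_n\,d\nu$, and monotone convergence finishes the job. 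Second, a convex $f$ on $\mathbb{R}_+$ can have $f'(0+)=-\infty$ (e.g.\ $f(x)=-\sqrt{x}$), in which case the decomposition anchored at $0$ is not available; this edge case is handled by replacing $f$ on $[0,\epsilon)$ by its tangent line at $\epsilon$, applying the main argument to the modified function, and letting $\epsilon\downarrow 0$. Neither point affects the validity of your overall strategy.
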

Next, we define what we mean by consistent pricing measures and the discretization schemes for obtaining such measures from the observed \textit{European} call option prices.

\subsection{Marginals with bounded support}
As mentioned earlier, in practice,  corresponding to a fixed maturity $t_i$, one observes only finitely many call prices $c^{i}_{0}>...>c^{i}_{n_i}>0$ associated with the strikes $0\leq k^{i}_{0}<..<k^{i}_{n_i},n_i \in \mathbb{N},i=1,2$. Hence, one cannot apply (\ref{BL hint}) to obtain the true risk-neutral marginal distributions at $t_i,i=1,2$. This leads us to the following definition.

\begin{defn}\cite{bauerle2021consistent}
    Let for $i=1,2$,
    \begin{align}
       P_i := \{ \mu\in P(\mathbb{R}_{+}):c^{i}_{j}=\int (x-k^{i}_{j})^+\mu(dx),j=0,..,n_i,\int x \mu(dx)=S_0\}  
    \end{align}
   be the set of all pricing measures that are consistent with the observable call prices having maturity $t_i$.
\end{defn}
We recall the earlier observation that the given MOT problem (\ref{OT})-(\ref{MOT}) reduces to an LP problem for the case of discretely supported marginals of the stock price process, $\mu$ and $\nu$ at time $t_1$ and $T$ respectively. However, given that the true underlying marginal distributions of the stock price process need not be discretely supported, one would like to construct a sequence of discrete marginals that can be shown to converge to the true underlying marginals $\mu\in P_1$ and $\nu\in P_2$ under some given metric, along with the convergence of the corresponding MOT problems.

To achieve this, in \cite{bauerle2021consistent}, the authors first make the following assumptions :
\begin{enumerate}
    \item There is a strike price $K>0$, with call prices equal to zero for every strike greater than or equal to $K$.
    \item A finite number of call prices $c^{i}_0>...>c^{i}_{n_i}=0$ are available for strikes $0=k^{i}_{0}<...<k^{i}_{n_i}=K$, with $c^{i}_{0} =S_0$.
\end{enumerate}  

Let $K_i:=\{k_{0}^{i}<...<k_{n_i}^{i}\},i=1,2$ denote the set of strike prices for which the call prices are observable. To obtain the call prices at all strikes,  in \cite{bauerle2021consistent}, the authors choose the functions $C^{*}_{\mu}, C^{*}_{\nu}$ that result from linearly interpolating the call prices available at each time point, $t_i,i=1,2$. The resulting function, with $C^*_{\mu}(k^{1}_{j})=c^{1}_{j},j=0,...,n_{1}$ and $k \in [k^{1}_j,k^{1}_{j+1}),j=0,..,n_{1},$ is given by
\begin{align}
    \label{interpolation}
    C^{*}_{\mu}(k)=\frac{k^{1}_{j+1}-k}{k^{1}_{j+1}-k^{1}_{j}}C^{*}_{\mu}(k^{1}_j)+\frac{k-k^{1}_{j}}{k^{1}_{j+1}-k^{1}_{j}}C^{*}_{\mu}(k^{1}_{j+1}),
\end{align}
and one similarly obtains $C^{*}_{\nu}$.

The special discrete marginals, consistent with the call price functions, are then constructed in \cite{bauerle2021consistent} using (\ref{interpolation}) as follows :

\begin{lem}\cite{bauerle2021consistent}
    \label{nicole_marginals}
    The measure $\mu^*$ consistent with $C^{*}_{\mu}$ is a discrete measure of the form
    \begin{align}
        \label{nicole_marginal_dist}
        \mu^{*} =\sum_{j=0}^{n}\left[\frac{C^{*}_{\mu}(k_{j+1})-C^{*}_{\mu}(k_j)}{k_{j+1}-k_{j}}-\frac{C^{*}_{\mu}(k_{j})-C^{*}_{\mu}(k_{j-1})}{k_{j}-k_{j-1}}\right]\delta_{k_j},
    \end{align}
    where we set
    \begin{align}
        \begin{split}
            \frac{C^{*}_{\mu}(k_{n+1})-C^{*}_{\mu}(k_n)}{k_{n+1}-k_{n}}=0\hspace{0.2cm}\text{and}\\
            \frac{C^{*}_{\mu}(k_{0})-C^{*}_{\mu}(k_{-1})}{k_{0}-k_{{-1}}}=-1
        \end{split}
    \end{align}
    and $\delta_x$ is the Dirac measure on point $x$.
\end{lem}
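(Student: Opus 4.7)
The plan is to verify by direct computation that the discrete measure $\mu^*$ prescribed in (\ref{nicole_marginal_dist}) reproduces the piecewise linear call price function $C^*_\mu$; equivalently, that $\int (x-k)^+ \, \mu^*(dx) = C^*_\mu(k)$ for every $k \geq 0$. The intuition is the Breeden--Litzenberger identity (\ref{BL hint}): the risk-neutral density equals the (distributional) second derivative of the call price function, so for a piecewise linear $C^*_\mu$ the second derivative is concentrated at the kinks $k_j$ with masses equal to the upward jumps in slope.

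First I would introduce the shorthand $s_j := (C^*_\mu(k_{j+1}) - C^*_\mu(k_j))/(k_{j+1} - k_j)$ for the slope of $C^*_\mu$ on $[k_j, k_{j+1}]$, together with the boundary conventions $s_{-1} := -1$ and $s_n := 0$ dictated in the statement. The financial content is that a call with strike $0$ is worth $S_0$ (slope $-1$) and a call with strike beyond $K$ is worth $0$ (slope $0$). With these conventions the putative weights are $w_j = s_j - s_{j-1}$. I would then check that $\mu^*$ is a probability measure: non-negativity of each $w_j$ follows from convexity of $C^*_\mu$ (inherited from no-arbitrage together with linear interpolation), while the total mass telescopes, $\sum_{j=0}^n w_j = s_n - s_{-1} = 1$.

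The main step is the consistency check. Fixing $k \in [k_j, k_{j+1})$ and applying summation by parts,
\begin{align*}
\int (x-k)^+ \, \mu^*(dx) &= \sum_{i=j+1}^{n} (k_i - k)(s_i - s_{i-1}) \\
&= -(k_{j+1}-k)\, s_j - \sum_{i=j+1}^{n-1} (k_{i+1} - k_i)\, s_i + (k_n - k)\, s_n.
\end{align*}
Using $(k_{i+1} - k_i)\, s_i = C^*_\mu(k_{i+1}) - C^*_\mu(k_i)$, the middle sum telescopes to $C^*_\mu(k_n) - C^*_\mu(k_{j+1}) = -C^*_\mu(k_{j+1})$; combined with $s_n = 0$ and the definition of $s_j$, rearranging yields
\begin{equation*}
\int (x-k)^+ \, \mu^*(dx) = \frac{k_{j+1}-k}{k_{j+1}-k_j}\, C^*_\mu(k_j) + \frac{k - k_j}{k_{j+1}-k_j}\, C^*_\mu(k_{j+1}),
\end{equation*}
which is exactly the interpolation formula (\ref{interpolation}) and hence equals $C^*_\mu(k)$.

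The main technical hurdle, though minor, is bookkeeping: keeping the indices and the two boundary conventions consistent. The $s_{-1} = -1$ convention is what generates the correct atom at $k_0 = 0$ and simultaneously forces $\int x \, \mu^*(dx) = C^*_\mu(0) = c^1_0 = S_0$ by evaluating the consistency identity at $k = 0$; the $s_n = 0$ convention ensures no mass escapes beyond $K$. I do not anticipate any genuine analytic obstacle, as the statement is essentially a discrete Breeden--Litzenberger computation carried out by a single telescoping argument.
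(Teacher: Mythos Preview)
Your argument is correct. The paper does not actually prove this lemma; it is stated with a citation to \cite{bauerle2021consistent} and used as a black box. So there is no ``paper's own proof'' to compare against beyond the Breeden--Litzenberger heuristic (\ref{BL hint}) that the paper mentions as motivation.

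Your Abel-summation verification is the standard direct route and is cleanly executed. One small point worth making explicit: when you write that the telescoping sum equals $-C^*_\mu(k_{j+1})$, you are using $C^*_\mu(k_n)=0$, which follows from the paper's standing assumption that $c^1_{n_1}=0$; this is fine, but it is the only place where a structural assumption (beyond piecewise linearity) enters. The edge case $j=n-1$ also deserves a word, since the middle sum $\sum_{i=j+1}^{n-1}$ is then empty; the formula still holds because $-(k_n-k)s_{n-1}+0=C^*_\mu(k)$ directly from the definition of $s_{n-1}$ and $C^*_\mu(k_n)=0$. Otherwise there is nothing to add.
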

Lemma \ref{nicole_marginals} allows one to obtain the marginal distribution from the available call option prices at finitely many strikes. It does not require any restrictions on the spacing between the strike points, and hence can be readily applied.

The following lemma shows that $\mu^{*}$ is the maximal element of the set $P_1$ with respect to convex ordering.
\begin{lem}\cite{bauerle2021consistent}
    \label{maximal}
    Suppose that $\mu\in P_1$, i.e., $\mu$ is another probability measure consistent with the observable call prices in $P_1$. Then
    \begin{align}
        \mu \leq_{c} \mu^{*}.
    \end{align}
\end{lem}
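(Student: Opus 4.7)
The plan is to reduce the claim to a pointwise comparison of call-price functions via Lemma \ref{convex2}, then exploit the convexity of the call-price functional to show that any $\mu \in P_1$ is dominated in convex order by the measure $\mu^*$ whose call-price function is the piecewise-linear interpolation $C^*_\mu$.

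Concretely, I would first observe that both $\mu$ and $\mu^*$ lie in $P_1$, so in particular $\int x\,\mu(dx) = \int x\,\mu^*(dx) = S_0 = 1$, and their call-price functions $C_\mu$ and $C_{\mu^*}$ agree with the observed prices at the strikes $k_0^1 < \dots < k_{n_1}^1$. By Lemma \ref{nicole_marginals}, $\mu^*$ is consistent with $C^*_\mu$, so $C_{\mu^*} \equiv C^*_\mu$ globally. Thus the goal reduces, via Lemma \ref{convex2}, to proving $C_\mu(k) \leq C^*_\mu(k)$ for every $k \geq 0$.

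The main ingredient is the convexity of $C_\mu$: since $(x-k)^+$ is convex in $k$ for each fixed $x$, the integral $C_\mu(k) = \int (x-k)^+\,\mu(dx)$ is a convex function of $k$. On any interval $[k_j^1, k_{j+1}^1]$, the restriction of $C_\mu$ is a convex function whose values at the endpoints coincide with those of $C^*_\mu$ (because $\mu \in P_1$). But $C^*_\mu$ is precisely the chord connecting these endpoints by formula (\ref{interpolation}), and a convex function lies on or below any of its chords; hence $C_\mu(k) \leq C^*_\mu(k)$ throughout $[0,K]$. For $k \geq K$, the assumption $c^1_{n_1}=0$ combined with non-negativity and monotonicity of call-price functions forces $C_\mu(k) = C^*_\mu(k) = 0$. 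For $k \leq 0$, both functions equal $1-k$, using $\int x\,\mu(dx) = \int x\,\mu^*(dx) = 1$. This establishes $C_\mu \leq C_{\mu^*}$ pointwise.

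Applying Lemma \ref{convex2} in the direction $C_\mu \leq C_{\mu^*} \Rightarrow \mu \leq_c \mu^*$ then concludes the proof. There is no serious obstacle here once Lemma \ref{nicole_marginals} is taken as given; the only step that needs a little care is ensuring the pointwise inequality holds on the unbounded regions $k \leq 0$ and $k \geq K$ — these are handled by the mean constraint and the vanishing of prices beyond $K$ respectively, rather than by the chord argument which only applies inside $[0,K]$.
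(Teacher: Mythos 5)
Your argument is correct and is essentially the standard proof of this result from \cite{bauerle2021consistent} (the paper only cites the lemma without reproducing the proof): convexity of $C_\mu$ forces it below the chords that define $C^*_\mu$, equality holds off $[0,K]$ by the mean constraint and the vanishing of prices beyond $K$, and Lemma \ref{convex2} converts $C_\mu\leq C_{\mu^*}$ into $\mu\leq_c\mu^*$. No gaps.
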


The Wasserstein distance between two probability measures is defined as : 

Let
\begin{align}
\label{transport_2_timepoints}
    \mathcal{P}(\mu,\mu^{*})=\{\mathbb{P}\in P(\mathbb{R}^2):\mu(B_1)=\mathbb{P}(B_1\times\mathbb{R}),\mu^{*}(B_2)=\mathbb{P}(\mathbb{R}\times B_2),B_1,B_2\in\mathcal{B}(\mathbb{R})\}.
\end{align}

\begin{defn}
\label{wassertein_def}
    The \textit{Wasserstein distance} of two probability measures $\mu,\mu^{*}\in P(\mathbb{R})$ is given by
    \begin{align}
        \label{wasserstein1}
        W(\mu,\mu^{*}) = \inf_{\mathbb{P}\in\mathcal{P}(\mu,\mu^{*})}\int |x-y|\mathbb{P}(dx,dy).
    \end{align}
\end{defn}
It can be observed from Definition \ref{wasserstein1} that the Wasserstein distance of two measures $\mu,\mu^*\in\mathcal{P}(\mathbb{R})$ is a special case of the usual optimal transport problem (\ref{OT})-(\ref{marginal}) with the cost function given by $c(x,y)=|x-y|$. \cite{bauerle2021consistent} lists the equivalent representations of the Wasserstein distance as given below. 
\begin{rem}\cite{bauerle2021consistent}
    \begin{enumerate}
        \item If $F_{\mu}$ and $F_{\mu^{*}}$ are the cumulative distribution functions of $\mu$ and $\mu^{*}$, the following equality also holds \cite{dall1956sugli}
          \begin{align}
          \label{wasserstein2}
              W(\mu,\mu^{*})=\int_{-\infty}^{\infty}|F_{\mu}(t)-F_{\mu^{*}}(t)|dt.
          \end{align}
          \item  A dual representation of the Wasserstein distance is as follows:
          \begin{align}
              \label{wasserstein3}
              W(\mu,\mu^{*})=\sup_{f\in C_1(\mathbb{R})}\int f(x)(\mu-\mu^{*})(dx),
          \end{align}
          where, $C_1(\mathbb{R}):=\{f:\mathbb{R}\rightarrow\mathbb{R}:f\hspace{0.25cm}\text{is Lipschitz-continuous with constant $1$}\},$ \cite{villani2009optimal}.
    \end{enumerate}
\end{rem}

.
\begin{thm}\cite{bauerle2021consistent}
\label{thm_bound_nicole}
    Let $\mu\in P_1$ with supp($\mu$) $\subset[0,K]$. Moreover choose $k_j= \frac{jK}{2^n}, j=0,...,2^n,n\in\mathbb{N}$. Then we have
    \begin{align}
        W(\mu,\mu^*)=2\cdot\sum_{j=0}^{2^n-1}\sup_{k\in[k_j,k_{j+1})}|C_{\mu^*}(k)-C_{\mu}(k)|\leq\frac{K}{2^n}.
    \end{align}
    If we additionally assume that $C_{\mu}\in C^2(\mathbb{R}_{+})$, then for any $n\in\mathbb{N}$, we have
    \begin{align}
        W(\mu,\mu^*)\leq \frac{T_{\mu}\cdot K^2}{2^{n+1}},
    \end{align}
    where $T_{\mu}=\sup_{\kappa\in[0,K]}|C_{\mu}^{''}(\kappa)|.$
\end{thm}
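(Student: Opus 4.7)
The plan is to translate everything into the CDF representation $W(\mu,\mu^*) = \int_0^K |F_\mu(t) - F_{\mu^*}(t)|\,dt$ from (\ref{wasserstein2}), exploit the fact that $C_\mu$ and $C_{\mu^*}$ coincide at every grid point $k_j = jK/2^n$ by the construction in Lemma \ref{nicole_marginals}, and control the integrand slab by slab. The key input is the identity $C_\nu(k)=\int_k^{\infty}(1-F_\nu(t))\,dt$, which turns the call-price gap into
\[
g_j(k) := C_{\mu^*}(k) - C_\mu(k) = \int_k^{\infty}(F_\mu(t)-F_{\mu^*}(t))\,dt;
\]
together with the matching conditions $g_j(k_j)=g_j(k_{j+1})=0$, this yields both the mean-zero property $\int_{k_j}^{k_{j+1}}(F_\mu-F_{\mu^*})\,dt = 0$ and the localised form $g_j(k) = \int_k^{k_{j+1}}(F_\mu-F_{\mu^*})\,dt$ on the slab $[k_j,k_{j+1}]$.

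Because $\mu^*$ is supported on the grid, $F_{\mu^*}$ is constant on $[k_j,k_{j+1})$, while $F_\mu$ is non-decreasing; the mean-zero property pins that constant to the integral average of $F_\mu$ on the slab, so $F_\mu-F_{\mu^*}$ is non-positive on $[k_j,t^*_j]$ and non-negative on $[t^*_j,k_{j+1})$ for a unique crossing point $t^*_j$. Consequently $g_j \geq 0$ on the slab and is maximised at $t^*_j$, with
\[
\sup_{k\in[k_j,k_{j+1})} |g_j(k)| = \int_{k_j}^{t^*_j}(F_{\mu^*}-F_\mu)\,dt = \int_{t^*_j}^{k_{j+1}}(F_\mu-F_{\mu^*})\,dt.
\]
Adding these two equal expressions yields $\int_{k_j}^{k_{j+1}}|F_\mu-F_{\mu^*}|\,dt = 2\sup_{[k_j,k_{j+1})}|g_j|$, and summing over $j$ gives the claimed equality.

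For the $K/2^n$ bound, since $F_{\mu^*}(k_j)$ (being an integral average of $F_\mu$) lies in the range of $F_\mu$ over $[k_j,k_{j+1}]$, we have $|F_\mu(t) - F_{\mu^*}(t)| \leq F_\mu(k_{j+1})-F_\mu(k_j) =: \Delta_j$ on the slab; integrating over $[k_j,k_{j+1}]$ and using $\sum_j \Delta_j \leq 1$ gives $W(\mu,\mu^*) \leq K/2^n$. Under the $C^2$ hypothesis, $F_\mu = 1 + C'_\mu$ is Lipschitz with constant $T_\mu$, and a Poincar\'e-type estimate using that $F_{\mu^*}$ is the integral average of $F_\mu$ tightens the pointwise bound to $|F_\mu - F_{\mu^*}| \leq T_\mu(k_{j+1}-k_j)/2$; integrating and summing the $2^n$ slabs then produces $W(\mu,\mu^*) \leq T_\mu K^2/2^{n+1}$. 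The main technical step I expect is the mean-zero/monotonicity argument that identifies the unique sign change on each slab and upgrades a pointwise observation into the stated equality; once that is in hand, both upper bounds reduce to short elementary estimates.
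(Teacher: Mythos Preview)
The paper does not give its own proof of this statement: Theorem~\ref{thm_bound_nicole} is quoted verbatim from \cite{bauerle2021consistent} as background for Section~\ref{section:Framework}, so there is nothing in the present paper to compare your argument against.

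That said, your proposal is a correct and self-contained proof. The identity $C_\rho(k)=\int_k^\infty(1-F_\rho)\,dt$ combined with the matching $C_{\mu^*}(k_j)=C_\mu(k_j)$ indeed gives the slabwise mean-zero property $\int_{k_j}^{k_{j+1}}(F_\mu-F_{\mu^*})\,dt=0$; monotonicity of $F_\mu$ versus the constant $F_{\mu^*}\equiv c_j$ on $[k_j,k_{j+1})$ then forces a single sign change, and the splitting of $\int|F_\mu-F_{\mu^*}|$ into two equal halves each equal to $\sup g_j$ is exactly right. Your pointwise estimate $|F_\mu(t)-c_j|\le T_\mu(k_{j+1}-k_j)/2$ under the $C^2$ hypothesis is also correct (it follows from $|F_\mu(t)-c_j|\le \frac{T_\mu}{k_{j+1}-k_j}\int_{k_j}^{k_{j+1}}|t-s|\,ds$, whose maximum over $t$ is $T_\mu(k_{j+1}-k_j)/2$), and integrating and summing gives the stated $T_\mu K^2/2^{n+1}$.

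One cosmetic point: the ``unique crossing point $t_j^*$'' need not be unique if $F_\mu$ is flat at the level $c_j$ or jumps across it, but any choice of $t_j^*$ with $F_\mu-F_{\mu^*}\le 0$ on $[k_j,t_j^*]$ and $\ge 0$ on $[t_j^*,k_{j+1})$ works, so this does not affect the argument.
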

Under the given assumption that the true underlying distributions $\mu,\nu$ have a compact support $[0,K]$ and $\mu\leq_{c}\nu$, Theorem \ref{thm_bound_nicole} shows that the dominating measure $\mu_n^*$ converges to the true marginal as the available call prices become dense over uniformly specified set of strikes. However, for the case when the true underlying marginal distributions have unbounded support, e.g., the \textit{Black-Scholes} model, constructing the discrete marginals $\mu^*$ and $\nu^*$ from the observable call prices using equation (\ref{marginal}) may not yield any feasible solution to the LP for the strict MOT problem. Further, one cannot directly extend the convergence results of Theorem 5.1 of \cite{bauerle2021consistent} to the case when the true marginal distributions have unbounded support. This leads us to the discretization scheme and corresponding convergence results from \cite{schmithals2019contributions}, which we use for our problems (\ref{generalized_objective}) and (\ref{generalized_objective_at_t1}).

\begin{rem}
    The question naturally arises whether the solution to the discrete MOT problem converges to the solution under the true measure as call prices become available over an increasingly dense set of strikes. An affirmative answer to this (under certain restrictions) can be found in the convergence results in \cite{bauerle2021consistent}.
\end{rem}

\subsection{Marginals with unbounded support}\label{subsec:marginals_unbdd}
We assume here that the theoretical marginals $\mu,\nu\in\mathcal{P}(\mathbb{R}_+)$ have unbounded support. We use the notation from \cite{schmithals2019contributions} and denote the approximating measures in this case by $\mu_{n}^{\infty},\nu_{n}^{\infty}$, for a given number of discretization points, $n$. Let 
$Z_{n}^{\infty}:=\{ k_{j}^{n}|j=0,\dots,n\}$ denote the set of strike prices for every $n\in\mathbb{N}$ with $k_0=0$. Differing from \cite{schmithals2019contributions}, \textbf{we do not assume the strike prices to be evenly spaced}. The associated option prices are given by
\begin{align*}
    C_{n}^{\mu,\infty}:=\{C_{\mu}(k)\mid k\in Z_{n}^{\infty}\}\hspace{1cm}\text{and}\hspace{1cm}C_{n}^{\nu,\infty}:=\{C_{\nu}(k)\mid k\in Z_{n}^{\infty}\}.
\end{align*}
Following \cite{schmithals2019contributions} we define the candidate functions $C_{\mu_{n}^{\infty}},C_{\nu_{n}^{\infty}}\in\mathcal{K}^C$ consistent with the prices in $C_{n}^{\mu,\infty}$ and $C_{n}^{\nu,\infty}$ such that $\mu_{n}^{\infty}\leq_{c}\nu_{n}^{\infty}$. For $k\in[k_{j}^n,k_{j+1}^n),j=0,\dots,n-1$, define
\begin{align}
    \label{call_price_inter_unbdd}
    &C_{\mu_{n}^{\infty}}(k):=\frac{k_{j+1}^n-k}{k_{j+1}^n-k_{j}^n}C_{\mu}(k_j^n)+\frac{k-k_j^n}{k_{j+1}^n-k_j^n}C_{\mu}(k_{j+1}^n)\\
     &C_{\nu_{n}^{\infty}}(k):=\frac{k_{j+1}^n-k}{k_{j+1}^n-k_{j}^n}C_{\nu}(k_j^n)+\frac{k-k_j^n}{k_{j+1}^n-k_j^n}C_{\nu}(k_{j+1}^n).
\end{align}
Let $k_{\mu,0}^n$ and $k_{\nu,0}^n$ denote the smallest zeros of the continuations of $C_{\mu_{n}^{\infty}}$ and $C_{\nu_{n}^{\infty}}$ on $[k_{n-1}^nk_n^n)$ to $(k_n^n,\infty)$ defined by
\begin{align}
\label{zeros_of_interpolated_lines}
    &k_{\mu,0}^n=\inf\bigg\{k\in(k_n^n,\infty)\bigg\vert\frac{k_{n}^n-k}{k_{n}^n-k_{n-1}^n}C_{\mu}(k_{n-1}^n)+\frac{k-k_{n-1}^n}{k_{n}^n-k_{n-1}^n}C_{\mu}(k_{n}^n)=0 \bigg\},\\
     &k_{\nu,0}^n=\inf\bigg\{k\in(k_n^n,\infty)\bigg\vert\frac{k_{n}^n-k}{k_{n}^n-k_{n-1}^n}C_{\nu}(k_{n-1}^n)+\frac{k-k_{n-1}^n}{k_{n}^n-k_{n-1}^n}C_{\nu}(k_{n}^n)=0 \bigg\}.
\end{align}
Define
\begin{align}
    \label{last_call_price_mu}
    C_{\mu_{n}^{\infty}}(k):=\begin{cases}
        \frac{k_{\mu,0}^n-k}{k_{\mu,0}^n-k_n^n}C_{\mu}(k_n^n),& \text{if $k\in(k_n^n,k_{\mu,0}^n),$}\\
        0,&\text{if $k\in[k_{\mu,0}^n,\infty)$}.
    \end{cases}\
\end{align}
For $C_{\nu_n^{\infty}}$, one needs to distinguish between two cases depending on the values of $k_{\mu,0}^n$ and $k_{\nu,0}^{n}$ as follows :

\begin{itemize}
    \item Case $1$ : If $k_{\mu,0}^n\leq k_{\nu,0}^n$ define
    \begin{align}
    \label{call_mu}
        C_{{\nu}_n^\infty}(k):=\begin{cases}
        \frac{k_{\nu,0}^n - k}{k_{\nu,0}^n - k_{n}^n}C_{\nu}(k_n^n),  & \text{if} ~~k\in(k_n^n,k_{\nu,0}^n),  \\
       0, & \text{if}~~ k\in [k_{\nu,0}^n,\infty).
        \end{cases}
    \end{align}
    \item Case $2$ : If $k_{\mu,0}^n>k_{\nu,0}^n$ define
    \begin{align}
    \label{call_nu}
        C_{{\nu}_n^\infty}(k):=\begin{cases}
        \frac{k_{\mu,0}^n - k}{k_{\mu,0}^n - k_{n}^n}C_{\nu}(k_n^n),  & \text{if} ~~k\in(k_n^n,k_{\mu,0}^n),  \\
       0, & \text{if}~~ k\in [k_{\mu,0}^n,\infty).
        \end{cases}
    \end{align}
\end{itemize}
Therefore, equations (\ref{call_price_inter_unbdd})-(\ref{call_nu}) define the call option price functions $C_{\mu_n^\infty},C_{\nu_n^\infty}\in\mathcal{K}^C$. From these functions, one can derive probability measures $\mu_n^\infty,\nu_n^\infty\in\mathcal{P}(\mathbb{R}_+)$  with expected value equal to one  using equation (\ref{nicole_marginal_dist}). The resulting measures are in convex order by construction, i.e. $\mu_n^\infty\leq_c\nu_n^\infty$, and they are discrete measures of the form
\begin{align}
    \mu_n^\infty:=\sum_{j=0}^{n-1}w_j^n\delta_{k_j^n}+\mu_{n,r}^\infty,
\end{align}
where $w_j^n=\frac{C_\mu(k_{j+1}^n)-C_\mu(k_j^n)}{k_{j+1}^n-k_j^n}-\frac{C_\mu(k_{j}^n)-C_\mu(k_{j-1}^n)}{k_{j}^n-k_{j-1}^n}$,
and 
\begin{align}
    \nu_n^\infty :=\sum_{n=0}^{n-1}v_i^n\delta_{k_i^n}+v_{n,r}^\infty.
\end{align}
$v_i^n=\frac{C_v(k_{i+1}^n)-C_v(k_i^n)}{k_{i+1}^n-k_i^n}-\frac{C_\nu(k_i^n)-C_\nu(k_{i-1}^n)}{k_i^n -k_{i-1}^n}$ and
\begin{align}
\begin{split}
    v_{n,r}^{\infty}&:=(v_n^n\delta_{k_n^n}+v_{\mu}^n\delta_{k_{\mu,0}^n})\mathbbm{1}_{\{k_{\mu,0}^n>k_{\nu,0}^n\}}+v_{\nu}^n\delta_{k_{\nu,0}^n}\mathbbm{1}_{\{k_{\mu,0}^n\leq k_{\nu,0}^n\}}\\
    &:=\bigg(\bigg[\frac{-C_\nu(k_n^n)}{k_{\mu,0}^n-k_n^n}-\frac{C_\nu(k_n^n)-C_\nu(k_{n-1}^n)}{k_n^n-k_{n-1}^n}\bigg]\delta_{k_n^n}+\frac{C_\nu(k_n^n)}{k_{\mu,0}^n-k_n^n}\delta_{k_{\mu,0}^n}\bigg)\mathbbm{1}_{\{k_{\mu,0}^n> k_{\nu,0}^n\}}\\
    &+\frac{C_\nu (k_{n-1}^n)-C_\nu(k_n^n)}{k_n^n-k_{n-1}^n}\delta_{k_{\nu,0}^n}\mathbbm{1}_{\{k_{\mu,0}^n\leq k_{\nu,0}^n\}}.
\end{split}
\end{align}

   Defining $ h^\nu(k_{4^n}^n) :=\mathbbm{1}_{\{F_\nu (k_{4^n}^n)\geq F_{\nu_{n}^{\infty}}(k_{4^n}^n)\}}$ in \cite{schmithals2019contributions} the authors prove that
   \begin{align}
   \label{wasserstein_bound_mu}
       W(\mu,\mu_n^{\infty})\leq \frac{1+C_{\mu}^,(k_{4^n}^n)}{2^n} + 2 C_{\mu}(k_{\mu,0}^n)=\frac{F_{\mu}(k_{4^n}^n)}{2^n}+2C_{\mu}(k_{\mu,0}^n),
   \end{align}
    and
    \begin{align}
    \begin{split}
    \label{wasserstein_bound_nu}
        W(\nu,\nu_n^{\infty})&\leq \frac{F_{\nu}(k_{4^n}^n)}{2^n}+ C_\nu\bigg(\max\{k_{\nu,0}^n,k_{\mu,0}^n\}\bigg)\\
        &+C_\nu \bigg(\max\{k_{\nu,0}^n,k_{\mu,0}^n\}\bigg)\bigg(\mathbbm{1}_{\{k_{\mu,0}^n\leq k_{\nu,0}^n\}}+\mathbbm{1}_{\{k_{\mu,0}^n>k_{\nu,0}^n\}}h^\nu(k_{4^n}^n)\bigg)\\
       & +(k_{\mu,0}^n-k_{4^n}^n)\bigg(C_{\nu}^{'}(k_{\mu,0}^n)-C_{\nu}^{'}(k_{4^n}^n)\bigg)\mathbbm{1}_{\{k_{\mu,0}^n>k_{\nu,0}^n\}}(1-h^\nu(k_{4^n}^n)).
    \end{split}
        \end{align}

We will now state our static model-free hedging problem and derive some results by observing structures similar to those that appear in the pricing problems described in this section.
\section{Problem at hand}\label{Section:Hedging problem}

We begin by defining $C(S(t),t, K, T)$ to be the price of a \textit{European} call option at time $t\in[0, T]$, with maturity $T$, strike price $K$, and the underlying stock price $S(t)$. Let $c_{\text{target}}(S(t_1), S(T))$ be the payoff of a path-dependent option at expiry $T$ with $0<t_1<T$.

A writer of this path-dependent option can hedge their short position with actively traded stocks, options, or any other instruments at their disposal. A hedging portfolio comprising of cash, the underlying stock, and \textit{European} call options on the same underlying asset, with maturity $t_1(<T)$, at time $0$ is then given by
\begin{align}
    \label{hedge0}
   w_0 + w_1 S(0)+\sum_{i=2}^{M}w_i C(S(0),0,K_{i-1},t_1),
\end{align}
where $w_i$'s are the weights and $K_i$'s are the strikes of the short maturity options.

The absolute expected worst-case error of this hedging portfolio at time $T$ is defined by
\begin{align}
    \label{MOT_hedge}
    \begin{split}
       P^{H}_{T}(\mu,\nu)&:=\inf_{w_i,0\leq i\leq M}\sup_{\mathbb{P}\in\mathcal{M}(\mu,\nu)}\mathbb{E}_{\mathbb{P}}\bigg[\bigg|[c_{\text{target}}(S(t_1),S(T))\\
       &-w_0-w_1S(T)-\sum_{i=2}^{M}w_i C(S(t_1),t_1,K_{i-1},t_1)\bigg|\bigg],
    \end{split}
   \end{align}
where $H$ is used to denote the hedging problem.  

As explained in Section \ref{sec:Introduction}, from the option writer's perspective, it is more important to compute the hedging error when the options in their hedging portfolio expire.  We will refer to expressions such as the one on the right in (\ref{MOT_hedge}) as absolute hedging error. We focus our numerical analysis on this problem in Section \ref{Sec:Numerical examples}. The inside maximization problem in (\ref{MOT_hedge}) is an MOT problem as described in Section \ref{sec:Introduction}. So we can talk here about the modified MOT (Mod-MOT) problem.

The Mod-MOT problem for the absolute hedging error at time $t_1$ (the maturity of the short-term options in the hedging portfolio) is given by
\begin{align}
    \label{MoMOT_hedge_t1}
    \begin{split}
       P^{H}_{t_1}(\mu,\nu)&:=\inf_{w_i,0\leq i\leq M}\sup_{\mathbb{P}\in\mathcal{M}(\mu,\nu)}\mathbb{E}_{\mu}\bigg[\bigg|\mathbb{E}_{\mathbb{P}^x}[c_{\text{target}}(S(t_1),S(T))|S(t_1)]\\
       &-w_0-w_1S(t_1)-\sum_{i=2}^{M}w_i C(S(t_1),t_1,K_{i-1},t_1)\bigg|\bigg], 
    \end{split}
   \end{align}
where $\mathbb{P}^x$ denotes the disintegration of the joint probability distribution with respect to $\mu$.

We assume that the only information available to the writer is the observable call option prices with maturities $t_1$ and $T$, respectively, at a finite number of strikes. As explained in Section \ref{sec:Introduction}, it is more important for the investor to look at the problem (\ref{MoMOT_hedge_t1}) when the options in their hedging portfolio expire than the problem (\ref{MOT_hedge}). Therefore, it is necessary to formulate the problem (\ref{MoMOT_hedge_t1}) for the case of discretely supported marginals obtained using these call option prices.  Let $\alpha$ with $\alpha(x)=\sum_{i=1}^{N_1}\delta_{x_{i}}(x)\alpha_{i}$ and $\beta$ with $\beta(y)=\sum_{j=1}^{N_2}\delta_{y_{j}}(y)\beta_{j}$ be the finitely supported marginal distributions at times $t_1$ and $T$ obtained using the available option prices as described in Section \ref{subsec:marginals_unbdd}. Then the corresponding min-max problem for the absolute hedging error at time $t_1$, with cash, stock, and two options (for simplicity) in the hedging portfolio, is
\begin{align}
    \label{ourLP}
    \begin{split}
        &\min_{w_i,0\leq i\leq3}\max_{p\in\mathbb{R}^{N_{1}\times N_{2}}}\sum_{i=1}^{N_1}\alpha_{i}\bigg|\sum_{j=1}^{N_2}p_{j|i}c_{\text{target}}(x_i,y_j)-w_0-w_1x_i- w_2(x_i -K_1)^{+}-w_3(x_i-K_2)^{+}\bigg|\hspace{0.2cm}\\
        &\text{subject to}
        \sum_{j=1}^{N_{2}}p_{i,j}=\alpha_i,\hspace{0.2cm}\text{for}\hspace{0.2cm} i=1,..,N_1,\\
    &\sum_{i=1}^{N_{1}}p_{i,j}=\beta_j,\hspace{0.2cm}\text{for}\hspace{0.2cm} j=1,..,N_2,\\
    &\sum_{j=1}^{N_2}p_{i,j}(y_j-x_i)=0,\hspace{0.2cm}\text{for}\hspace{0.2cm} i=1,..,N_1,\\
    & w_0,w_1,w_2,w_3\in\mathbb{R},\\
   & p_{i,j}\geq0,
    \end{split}
\end{align}
where  $p_{j|i}=\frac{p_{i,j}}{\alpha_i}$  denotes the conditional probability that $Y=y_j$, given $X=x_i$. 

Using the fact that $\alpha_i> 0, i=1\dots N_1$, the objective function in (\ref{ourLP}) can be simplified to
\begin{align}
    \label{our_objective}
    \min_{w_i,0\leq i \leq3}\max_{p\in\mathbb{R}^{N_{1}\times N_{2}}}\sum_{i=1}^{N_1}\bigg|\sum_{j=1}^{N_2}p_{i,j}c_{\text{target}}(x_i,y_j)-\alpha_i\{w_0+w_1x_i+w_2(x_i -K_1)^{+}+w_3(x_i-K_2)^{+}\}\bigg|.
\end{align}

\begin{rem}
\begin{enumerate}
   \item Put options can be included in the hedging portfolio in (\ref{our_objective}) by using put-call parity.
    \item The resulting hedge is valid only till time $t_1$, when the short-maturity options expire. At maturity $t_1$, the writer can either close their position on the target option with maturity $T$ or set up an entirely new hedge with available call/put options.
     \item  If the true dynamics of the underlying stock price process are known and satisfy one-factor Markovian conditions, our hedging problem is similar to the static-hedging approaches in \cite{carr2014static} and \cite{banerjeemultiperiod} where the exact weights of the hedging portfolio (constituting only the options with shorter maturities than the target maturity $T$) are computed using Gauss-Hermite, Gauss-Laguerre, and Gaussian quadrature algorithms. 
    \item Corresponding to different choices for weights $w_i,0\leq i \leq M$, in the hedging portfolio, one obtains various solutions to the maximization problem, and the associated joint-probability distributions $\{p_{i,j}\}_{1\leq i \leq M_1,1\leq j\leq M_2}$. Hence, formulating the hedging problem as a min-max problem allows one to capture this effect. 
     \item  If the investor knows the underlying joint probability distribution $\mathbb{P}\in\mathcal{M}(\mu,\nu)$, it is possible to compute the weights that minimize the corresponding hedging error under the chosen distribution. One can then obtain the maximum loss for different combinations of probability distributions, with the objective function of interest given by
    \begin{align}
        \label{MoMOT_hedge2}
        \begin{split}
           P_{t_1}^{H_2}(\mu,\nu)&:=\sup_{\mathbb{P}\in\mathcal{M}(\mu,\nu)}\inf_{w_i}\mathbb{E}_{\mu}\bigg[\bigg|\mathbb{E}_{\mathbb{P}^x}
           [c_{\text{target}}(S(t_1),S(T))|S(t_1)]\\
           &-w_0-w_1S(t_1)-\sum_{i=2}^{M}w_i C(S(t_1),t_1,K_{i-1},t_1)\bigg|\bigg]. 
        \end{split}
      \end{align}
      Under the assumption that the marginal distributions $(\mu,\nu)$ of the stock price process are finitely supported, the problem (\ref{MoMOT_hedge2}) reduces to a \textit{max-min} problem.
    \item The two problems (\ref{MoMOT_hedge_t1}) and (\ref{MoMOT_hedge2}) are not necessarily equal and are related by the following inequality
    \begin{align}
    \label{relaxedMOT_inequality}
    \begin{split}
       &\sup_{\mathbb{P}\in\mathcal{M}(\mu,\nu)}\inf_{w_l,0\leq l \leq M}\mathbb{E}_{\mu}\bigg[\bigg|\mathbb{E}_{\mathbb{P}^x}[c_{\text{target}}(S(t_1),S(T))|S(t_1)]\\
       &-w_0-w_1S(t_1)-\sum_{i=2}^{M}w_i C(S(t_1),t_1,K_{i-1},t_1)\bigg|\bigg]\\
        &\leq \inf_{w_l,0\leq l \leq M}\sup_{\mathbb{P}\in\mathcal{M}(\mu,\nu)}\mathbb{E}_{\mu}\bigg[\bigg|\mathbb{E}_{\mathbb{P}^x}[c_{\text{target}}(S(t_1),S(T))|S(t_1)]\\
        &-w_0-w_1S(t_1)-\sum_{i=2}^{M}w_i C(S(t_1),t_1,K_{i-1},t_1)\bigg|\bigg].
    \end{split}
    \end{align}
\end{enumerate}
\end{rem}
The reader can refer to well-established results like Sion's min-max Theorem in  \cite{sion1958general}, \cite{terkelsen1972some} for sufficient conditions on the function $ \mathbb{E}_{\mu}\bigg[\bigg|\mathbb{E}_{\mathbb{P}^x}[c_{\text{target}}(S(t_1),S(T))|S(t_1)]-w_0-w_1S(t_1)-\sum_{l=2}^{M}w_l C(S(t_1),t_1,K_{i-1},t_1)\bigg|\bigg]$ for equality to hold in (\ref{relaxedMOT_inequality}).

If we look at the corresponding max-min problem for (\ref{MoMOT_hedge_t1}) for the discretely supported marginals, we get
\begin{align*}
&\max_{p_{i,j}}\min_{w_l,0\leq l \le 3}\sum_{i=1}^{N_1}\alpha_i|\sum_{j=1}^{N_2}\frac{p_{i,j}c(x_i,y_j)}{\alpha_i}\\
&-
    w_0-w_1x_i-w_2(x_i-K_1)^{+}-w_3(x_i-K_2)^+|\hspace{0.2cm}\\
    &\text{subject to}
    \sum_{j=1}^{N_2}p_{i,j}=\alpha_i, i=1,..,N_1,\\
    &\sum_{i=1}^{N_1}p_{i,j}=\beta_j, j=1,..,N_2,\\
    &\sum_{j=1}^{N_2}p_{i,j}(y_j-x_i)=0, i=1,..,N_1,\\
   & w_0,w_1,w_2,w_3\in\mathbb{R},\\
   & p_{i,j}\geq0,
\end{align*}
where the inner minimization problem is
\begin{align}
\label{inner_min_prob}
\begin{split}
   &\min_{w_l,0\leq l\leq3}\sum_{i=1}^{N_1}\alpha_i|\sum_{j=1}^{N_2}\frac{p_{i,j}c(x_i,y_j)}{\alpha_i}\\
    &-
    w_0-w_1x_i-w_2(x_i-K_1)^{+}-w_3(x_i-K_2)^+|,\\
    &w_0,w_1,w_2,w_3\in\mathbb{R}.  
\end{split}
    \end{align}

  For writing the above minimization problem (\ref{inner_min_prob}) in standard form, we first introduce variables $w_{l}^+,w_l^-,0\leq l \leq 3,$ satisfying $w_l=w_l^{+}-w_{l}^{-},0\leq l\leq 3$ and change the sign of the inequalities to obtain the following LP problem :

\begin{align}\label{min_standard_form}
  \begin{split}
    &\min_{w_0^{+},w_0^{-},w_1^{+},w_1^{-},w_{2}^{+},w_{2}^{-},w_3^+,w_3^-,z_i}\sum_{i=1}^{N_1}\alpha_i z_i\hspace{0.2cm}\text{subject to}\\
    &z_i+
    (w_0^{+}-w_{0}^{-})+(w_1^{+}-w_1^{-})x_i+(w_2^{+}-w_2^{-})(x_i-K_1)^{+}\\
    &+(w_3^{+}-w_3^{-})(x_i-K_2)^{+}\geq \sum_{j=1}^{N_2}\frac{p_{i,j}c(x_i,y_j)}{\alpha_i},\\
    &z_i
    -(w_0^{+}-w_{0}^{-})-(w_1^{+}-w_1^{-})x_i-(w_2^{+}-w_2^{-})(x_i-K_2)^{+}\\
    &-(w_3^{+}-w_3^{-})(x_i-K_2)^{+}\geq -\sum_{j=1}^{N_2}\frac{p_{i,j}c(x_i,y_j)}{\alpha_i}, \\
    & w_0^{+},w_0^{-},w_1^{+},w_1^{-},w_{2}^{+},w_{2}^{-},w_3^+,w_3^-\geq0, z_i\geq0, i=1,..,N_1.\\  
  \end{split}  
  \end{align}

A straightforward calculation gives the dual maximization LP problem to (\ref{min_standard_form})  as follows :
\begin{align}\label{dual_standard_max_form}
\begin{split}
     &\max_{a_i,b_i}\sum_{i=1}^{N_1}v_i(a_i-b_i)\hspace{0.2cm}\text{subject to}\\
    &\sum_{i=1}^{N_1}(a_i-b_i)=0,\\
    &\sum_{i=1}^{N_1}x_i(a_i-b_i)=0,\\
    &\sum_{i=1}^{N_1}(x_i-K_1)^{+}(a_i-b_i)=0,\\
    &\sum_{i=1}^{N_1}(x_i-K_2)^{+}(a_i-b_i)=0,\\
    &a_i+b_i\leq\alpha_i, i=1,..,N_1,\\
    &a_i,b_i\geq0,i=1,..,N_1, 
\end{split}
\end{align}
where $v_i =\sum_{j=1}^{N_2}\frac{p_{i,j}c(x_i,y_j)}{\alpha_i}$. Using the dual representation (\ref{dual_standard_max_form}) and substituting the value of $v_i$ in terms of the joint probabilities $p_{i,j}$, we can formulate the max-min problem as a single maximization problem.

\begin{section}{Convergence results}\label{Sec:Convergence results}
We begin this section by recalling the definition of the approximating measures $(\mu_n,\nu_n)$ derived using the observable call option prices in Section \ref{section:Framework}. Our aim in this Section is to prove that under certain restrictions the convergence of the given sequences $\{\mu_n\}_{n\in\mathbb{N}},\{\nu_n\}_{n\in\mathbb{N}}$ of probability measures to the true underlying marginal distributions $\mu,\nu$ respectively, implies convergence of the solutions of the corresponding hedging problems given by (\ref{MOT_hedge}). To achieve this, we need some important results from \cite{schmithals2019contributions} and \cite{beiglbock2019dual}.

First, let us define the upper price bound problem for the general market scenario, as described in \cite{schmithals2019contributions} by
\begin{align}\label{primal_d_dim}
    \overline{P}(c):=\sup_{\mathbb{Q}\in\mathcal{M}}\mathbb{E}[c(S_{t_1}^1,\dots,S_{t_n}^1,\dots,S_{t_1}^d,\dots,S_{t_1}^d)]
\end{align}
and the corresponding super hedging problem 
\begin{align}\label{dual_d_dim}
    \overline{D}(c):=\inf \sum_{j=1}^d\sum_{i=1}^n\int_{\mathbb{R}}\varphi_{i,j}(s_{t_i}^j)\mu_{i,j}(ds_{t_i}^j)=\inf\sum_{j=1}^d\sum_{i=1}^n\mathbb{E}_{\mu_{i,j}}[\varphi_{i,j}(s_{t_i}^j)],
    \end{align}
    with
    \begin{align*}
      \varphi_{i,j}\in\mathcal{S}:=\bigg\{u:\mathbb{R}\rightarrow\mathbb{R}\Bigg|u(x)=a+bx+\sum_{l=1}^mc_l(x-k_l)^+,\quad a,b,c_l,k_l\in\mathbb{R},m\in\mathbb{N}\Bigg\}.
    \end{align*}
    The equivalent problems for the standard market case with $d=1$ and $n=2$ (corresponding to our set-up in Sections \ref{section:Framework} and \ref{Section:Hedging problem}) are
    \begin{align}
\label{MOT_1dim}
    P_{2}^c(\mu,\nu):=\sup_{\mathbb{Q}\in\mathcal{M}_2(\mu,\nu)}\mathbb{E}_{\mathbb{Q}}[c(X,Y)]
    \end{align}
    and 
    \begin{align}
        \label{dual_1dim}
        \begin{split}
            D_{2}^c(\mu,\nu)&:=\inf_{(\varphi,\psi,h)\in\mathcal{D}_{2}^{\leq c}}\Bigg\{\int_{\mathbb{R}}\varphi(x)\mu(dx)+\int_{\mathbb{R}}\psi(y)\nu(dy)\Bigg\}\\
            &=\inf_{(\varphi,\psi,h)\in\mathcal{D}_{2}^{\leq c}}\{\mathbb{E}_{\mu}[\varphi(X)]+\mathbb{E}_{\nu}[\psi(Y)].
        \end{split}
    \end{align}
    with
    \begin{align*}
      \mathcal{D}_{2}^{\leq c};=\{(\varphi,\psi,h)|\varphi^+\in\mathbb{L}^1(\mathbb{R},\mu),\psi^+\in\mathbb{L}^1(\mathbb{R},\nu),h\in\mathbb{L}^0(\mathbb{R}),\\
      \varphi(x)+\psi(y)+h(x)(y-x)\geq c(x,y),(x,y)\in\mathbb{R}^2\} . 
    \end{align*}
 Corollary \ref{Corollary_upper_semicont} gives a sufficient condition for the equivalence of the problems (\ref{primal_d_dim}) and (\ref{dual_d_dim}) as follows :

\begin{cor}(\cite{schmithals2019contributions})
\label{Corollary_upper_semicont}
  Let $\mathcal{M}\neq \phi$ and $c:\mathbb{R}^{nd}\rightarrow[-\infty,\infty)$ be an upper semi-continuous payoff function such that there is a constant $K\in\mathbb{R}$ with
  \begin{align*}
      c(s_{t_1}^1,..,s_{t_n}^{1},\cdots,s_{t_1}^d,\cdots,s_{t_n}^d)\leq K(1+\sum_{j=1}^{d}\sum_{i=1}^{n}|s_{t_i}^{j}|)
  \end{align*}
  for all $(s_{t_1}^1,..,s_{t_n}^{1},\cdots,s_{t_1}^d,\cdots,s_{t_n}^d)\in\mathbb{R}^{nd}$. Then $\overline{P}(c)=\overline{D}(c)$ and there is a $\mathbb{Q}^*\in\mathcal{M}$ such that $\overline{P}(c)=\mathbb{E}_{\mathbb{Q}^*}[c].$
\end{cor}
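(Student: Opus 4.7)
The plan is to obtain the corollary by assembling three standard ingredients: weak duality (routine), primal attainment (via tightness plus upper semi-continuity), and strong duality via approximation of $c$ by bounded continuous payoffs. For notational simplicity I will focus on the one-dimensional, two-marginal version (\ref{MOT_1dim})--(\ref{dual_1dim}); the multi-asset, multi-period setting (\ref{primal_d_dim})--(\ref{dual_d_dim}) follows by the same pattern, applied separately to each factor $\mu_{i,j}$ and to each martingale increment $s_{t_{i+1}}^{j}-s_{t_i}^{j}$.

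First I would check weak duality $\overline{P}(c)\leq\overline{D}(c)$. For any $\mathbb{Q}\in\mathcal{M}_2(\mu,\nu)$ and any admissible $(\varphi,\psi,h)\in\mathcal{D}_{2}^{\leq c}$, integrate the pointwise bound $\varphi(x)+\psi(y)+h(x)(y-x)\geq c(x,y)$ against $\mathbb{Q}$; the martingale condition forces $\mathbb{E}_{\mathbb{Q}}[h(X)(Y-X)]=\mathbb{E}_{\mu}[h(X)\,\mathbb{E}_{\mathbb{Q}}[Y-X\mid X]]=0$, leaving $\mathbb{E}_{\mathbb{Q}}[c(X,Y)]\leq\mathbb{E}_{\mu}[\varphi(X)]+\mathbb{E}_{\nu}[\psi(Y)]$. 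Taking the supremum over $\mathbb{Q}$ and the infimum over the dual tuples closes this step; no structural assumption on $c$ beyond measurability is needed here.

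Second, I would establish primal attainment. Since $\mu,\nu$ are probability measures with fixed finite first moment (both equal to $S_0$), the set $\mathcal{M}_2(\mu,\nu)$ is tight, and the martingale constraint is closed under weak convergence on the set of couplings with these marginals, so $\mathcal{M}_2(\mu,\nu)$ is weakly compact. For any maximizing sequence $\mathbb{Q}_n$ converging weakly to some $\mathbb{Q}^{\ast}$, the linear growth bound $c\leq K(1+|x|+|y|)$ combined with the fixed first moments yields uniform integrability of $c^{+}$ under $(\mathbb{Q}_n)$; together with upper semi-continuity, a Skorokhod-representation plus reverse-Fatou argument gives $\limsup_n\mathbb{E}_{\mathbb{Q}_n}[c]\leq\mathbb{E}_{\mathbb{Q}^{\ast}}[c]$, so $\mathbb{Q}^{\ast}$ attains $\overline{P}(c)$.

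The main obstacle is the reverse inequality $\overline{D}(c)\leq\overline{P}(c)$, and my approach is approximation from above. Choose a decreasing sequence of bounded continuous functions $c_k\downarrow c$, each dominated by the same affine majorant $K(1+|x|+|y|)$; such a sequence exists because $c$ is upper semi-continuous and linearly bounded. For every bounded continuous $c_k$ the duality $\overline{P}(c_k)=\overline{D}(c_k)$ is the classical martingale transport duality of Beiglb\"ock--Henry-Labord\`ere--Penkner, obtained by a minimax / Hahn--Banach argument on the pairing between Radon measures and bounded continuous functions. To finish I pass to the limit in $k$: the inclusion $\mathcal{D}_{2}^{\leq c_k}\subset\mathcal{D}_{2}^{\leq c}$ gives $\overline{D}(c_k)\geq\overline{D}(c)$ for all $k$, while $\overline{P}(c_k)\to\overline{P}(c)$ is obtained by combining the attainment from the previous paragraph with the monotone bound $\mathbb{E}_{\mathbb{Q}_k^{\ast}}[c_k]\leq\mathbb{E}_{\mathbb{Q}_k^{\ast}}[c_j]$ for $k\geq j$ and weak continuity at each fixed bounded $c_j$. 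The genuinely delicate point, and where the linear growth hypothesis is used essentially, is uniform control over the affine and slope coefficients of the dual functions in $\mathcal{S}$ when the marginals have unbounded support, so that near-optimal dual tuples for $c_k$ do not blow up as $k\to\infty$; this is precisely the moment-based estimate that the bound $c\leq K(1+\sum|s_{t_i}^{j}|)$ is designed to furnish.
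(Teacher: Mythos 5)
The paper does not prove this corollary at all: it is imported verbatim from \cite{schmithals2019contributions}, where it is in turn a direct consequence of the Beiglb\"ock--Henry-Labord\`ere--Penkner duality theorem, so there is no in-paper proof to compare against. Your sketch reproduces the standard argument behind that theorem (weak duality via the vanishing of the trading term, primal attainment by compactness of $\mathcal{M}$ plus upper semi-continuity, strong duality by monotone approximation $c_k\downarrow c$ through continuous payoffs of linear growth) and is essentially sound. One remark is misplaced, though: since the corollary claims no dual attainment, the limiting step needs only the monotone inclusion $\mathcal{D}^{\leq c_k}\subset\mathcal{D}^{\leq c}$ together with $\overline{P}(c_k)\to\overline{P}(c)$; no uniform control of the affine and slope coefficients of near-optimal dual tuples is required, and the linear-growth hypothesis is used instead to make the affine majorant exactly replicable (reducing to the bounded case) and to secure uniform integrability in the primal limit.
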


The \textit{dual minimizer}, which is the super hedge, may not be attained, and hence, we need certain conditions to ensure the existence of a \textit{dual minimizer}. This leads us to the following.

\begin{defn}(\cite{beiglbock2019dual})
\label{def_concavifier}
    Let $\mu\leq_{c}\nu$ and $c:\mathbb{R}^2\rightarrow\mathbb{R}$ be a payoff function. Then a triple $(\varphi,\psi,h)$ of functions $\varphi:\mathbb{R}\rightarrow\mathbb{R}\cup\{\infty\},\psi:\mathbb{R}\rightarrow\mathbb{R}\cup\{\infty\}$ and $h:\mathbb{R}\rightarrow\mathbb{R}$ is called \textit{dual minimizer}, if $\varphi$ is finite $\nu$-almost surely and, for any maximizer $\mathbb{Q}^*\in\mathcal{M}_2(\mu,\nu)$ of the upper price bound problem in (\ref{MOT_1dim}), we have
    \begin{align*}
        \varphi(x)+\psi(y)+h(x)(y-x)\geq c(x,y),\quad \forall (x,y)\in\mathbb{R}^2\\
        \varphi(x)+\psi(y)+h(x)(y-x)= c(x,y),\quad \text{for}\quad \mathbb{Q}^*-\text{almost every}\quad(x,y).
    \end{align*}
    \end{defn}
    \begin{defn}
        \cite{beiglbock2019dual}
        Let $J$ be an interval and $\mu\in\mathcal{P}_{\alpha}(\mathbb{R}).$ We say that a function $c:\mathbb{R}^2\rightarrow\mathbb{R}$ is \textit{semi-concave} in $y\in J$ $\mu-$uniformly, if there exists a Borel function $u:J\rightarrow\mathbb{R}$ such that for $\mu$-almost every $x$, the mapping $y\mapsto c(x,y)+u(y)$ is continuous and concave on $J.$ In this case, we say that $u$ is a $y$-\textit{concavifier} on $J$ for $c.$
    \end{defn}
    Theorems \ref{thm_dual_min_and_concavifier} and \ref{thm_lipschitz_bound} give conditions on the cost function that guarantee the existence of a dual minimizer.
    \begin{thm}
        \cite{beiglbock2019dual}
        \label{thm_dual_min_and_concavifier}
        Let $\mu\leq_c \nu, J:=\text{conv(supp}(\nu))$ and $c:\mathbb{R}^2\rightarrow\mathbb{R}$. Suppose that there a $y-$concavifier $u$ exists on $J$ for $c$. If $J$ is not compact, then further suppose $y\rightarrow c(x,y)+u(y)$ is of linear growth on $J$. Then a dual minimizer exists in the sense of Definition \ref{def_concavifier}.
    \end{thm}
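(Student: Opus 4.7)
The plan is to exploit the $y$-concavifier $u$ to reduce the problem to the case where $c$ itself is concave in $y$, then build the dual potentials from supporting hyperplanes to the concave graphs $y \mapsto c(x,y)$. Specifically, set $\tilde c(x,y) := c(x,y) + u(y)$, which is concave in $y$ for $\mu$-almost every $x$. If $(\varphi, \tilde\psi, h)$ is a dual minimizer for $\tilde c$ in the sense of Definition \ref{def_concavifier}, then $(\varphi, \tilde\psi - u, h)$ is a dual minimizer for $c$: the pointwise super-replication inequality and the $\mathbb{Q}^*$-almost sure equality both transfer by subtracting $u(y)$ from each side, and any maximizer $\mathbb{Q}^* \in \mathcal{M}_2(\mu,\nu)$ for $c$ is also a maximizer for $\tilde c$ (since $u$ depends only on $y$ and thus shifts the primal value by the constant $\int u \, d\nu$). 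Hence the task reduces to the concave-in-$y$ case with linear growth.

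In this reduced setting, I would first invoke the existence of a primal maximizer $\mathbb{Q}^* \in \mathcal{M}_2(\mu,\nu)$ via Corollary \ref{Corollary_upper_semicont} (using the linear growth bound on $\tilde c$ together with finite first moments guaranteed by $\mu \leq_c \nu$), and disintegrate it as $\{\mathbb{Q}^*_x\}_{x \in \mathbb{R}}$ with barycenters $\int y\, \mathbb{Q}^*_x(dy) = x$. For $\mu$-almost every $x$, concavity of $y \mapsto c(x,y)$ together with the $c$-optimality of $\mathbb{Q}^*$ forces $c(x,\cdot)$ to coincide on $\mathrm{conv}(\mathrm{supp}(\mathbb{Q}^*_x))$ with an affine function of slope $h(x) \in \partial_y^+ c(x,\cdot)$ at an interior point of $\mathrm{supp}(\mathbb{Q}^*_x)$. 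Measurability of the map $x \mapsto h(x)$ follows from the Kuratowski--Ryll-Nardzewski selection theorem applied to the closed-convex-valued superdifferential multifunction.

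With $h$ fixed, I would define $\psi(y)$ as the pointwise infimum over $x$ of the affine upper bounds $y \mapsto c(x,y) - h(x)(y-x) + \text{const}(x)$ (essentially a concave-envelope / Legendre-type construction), and then set $\varphi(x) := \sup_y [c(x,y) - \psi(y) - h(x)(y-x)]$, so that the pointwise dual inequality $\varphi(x) + \psi(y) + h(x)(y-x) \geq c(x,y)$ holds by construction. The $\mathbb{Q}^*$-almost sure equality follows from the affine coincidence of $c(x,\cdot)$ on $\mathrm{supp}(\mathbb{Q}^*_x)$. The linear growth hypothesis on $\tilde c(x,\cdot)$ yields a uniform bound of the form $|h(x)| \leq K(1 + |x|)$; combined with finite first moments of $\mu$ and $\nu$, this gives $\varphi^+ \in L^1(\mu)$ and $\psi^+ \in L^1(\nu)$, and in particular $\varphi$ is finite $\nu$-almost surely as required by Definition \ref{def_concavifier}.

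The main obstacle is the non-compact case: when $J$ is unbounded, weak limits of primal maximizing sequences may lose mass at infinity, and the superdifferentials $\partial_y^+ c(x,\cdot)$ may blow up as $|y| \to \infty$, both of which would break the integrability of $\psi$ and the existence of $\mathbb{Q}^*$. The linear growth assumption is exactly what is needed to suppress both pathologies, by giving uniform control on the slopes $h(x)$ and on the affine dominants used to define $\psi$. A secondary technical subtlety is that $u$ only furnishes concavity $\mu$-almost everywhere, so the reduction step and the measurable selection of $h$ must be performed off a $\mu$-null set; the extension of $h$ and $\varphi$ to that null set is arbitrary and does not affect the $L^1$ conditions or the $\mathbb{Q}^*$-almost sure equality.
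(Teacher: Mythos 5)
First, note that the paper does not prove this theorem: it is imported verbatim from \cite{beiglbock2019dual}, so there is no in-paper argument to compare against. Judged on its own, your sketch gets the correct first move (absorbing the concavifier $u$ into the cost and observing that maximizers and dual triples for $c$ and $\tilde c=c+u$ correspond to one another), but the core of the argument has a genuine gap. The step ``concavity of $y\mapsto \tilde c(x,y)$ together with optimality of $\mathbb{Q}^*$ forces $\tilde c(x,\cdot)$ to coincide with an affine function on $\mathrm{conv}(\mathrm{supp}(\mathbb{Q}^*_x))$'' is essentially equivalent to the existence of a dual optimizer (it is the complementary-slackness identity $\varphi(x)+\psi(y)+h(x)(y-x)=c(x,y)$ read off against a concave $c(x,\cdot)$), so invoking it as an input rather than deriving it from a perturbation argument makes the proof circular at exactly the point where the real work lies.

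The second gap is the global construction of $\psi$ as a pointwise infimum of affine dominants and $\varphi$ as the corresponding supremum. For martingale transport this construction can degenerate: the known counterexamples to dual attainment arise precisely because potentials can only be normalized up to affine functions separately on each irreducible component of $(\mu,\nu)$ in the sense of Beiglb\"ock--Juillet, and a na\"ive global envelope can be identically $-\infty$ or fail the required integrability. The actual proof in \cite{beiglbock2019dual} decomposes $(\mu,\nu)$ into irreducible components, constructs the triple component by component (using compactness of normalized potentials on each component, which is where the linear-growth hypothesis enters), and then glues; your sketch omits this decomposition entirely, and with it the reason the theorem is true despite dual attainment failing for general costs. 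The measurable-selection and integrability remarks are fine in spirit, but they rest on the two unproved steps above.
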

    \begin{thm}\cite{beiglbock2019dual}
        \label{thm_lipschitz_bound}
        Suppose the assumptions of Theorem \ref{thm_dual_min_and_concavifier} are satisfied and that further $c$ is Lipschitz continuous on $J\times J$ and $u$ is Lipschitz continuous on $J$. Then there exists a dual minimizer $(\varphi,\psi,h)$ such that $\varphi$ and $\psi$ are Lipschitz continuous on $J$ and $|h|$ is bounded on $J$.
    \end{thm}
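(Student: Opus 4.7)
The plan is to start from the dual minimizer $(\varphi_0,\psi_0,h_0)$ supplied by Theorem \ref{thm_dual_min_and_concavifier} and upgrade it to one with the asserted Lipschitz conclusions, using the concavifier $u$ to recast the problem in a concave-in-$y$ framework in which supergradient arguments become available.

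First I would absorb the concavifier by setting $\tilde c(x,y):=c(x,y)+u(y)$ and $\tilde\psi_0(y):=\psi_0(y)+u(y)$. Since $c$ and $u$ are Lipschitz, $\tilde c$ is Lipschitz on $J\times J$ with constant $L:=\mathrm{Lip}(c)+\mathrm{Lip}(u)$, by the concavifier property $\tilde c(x,\cdot)$ is concave on $J$ for $\mu$-a.e.\ $x$, and the shift by $u$ is harmless since $\int u\,d\nu$ is a finite constant (using that $u$ has linear growth on $J$ and $\nu$ has finite first moment because $\mu\leq_c\nu$). The dual inequality rewrites as
\[
\varphi_0(x)+\tilde\psi_0(y)+h_0(x)(y-x)\geq\tilde c(x,y),
\]
with $\mathbb{Q}^*$-almost sure equality.

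Next I would show that $h_0$ can be taken bounded by $L$. For $\mathbb{Q}^*$-a.e.\ $(x_0,y_0)$ the inequality above is tight at $y=y_0$, so the affine map $y\mapsto h_0(x_0)(y-y_0)$ is a supporting minorant of the concave function $\tilde c(x_0,\cdot)-\tilde\psi_0(\cdot)-\tilde c(x_0,y_0)+\tilde\psi_0(y_0)$ at $y_0$; since $\tilde c(x_0,\cdot)$ is $L$-Lipschitz concave on $J$, this forces $|h_0(x_0)|\leq L$ on the $x$-marginal of $\operatorname{supp}(\mathbb{Q}^*)$, and off this set a bounded measurable selection from the superdifferential of $\tilde c(x,\cdot)$ at a designated contact point preserves dual feasibility and $\mathbb{Q}^*$-equality while keeping $\|h_0\|_\infty\leq L$. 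With $h_0$ bounded, I would regularise $\tilde\psi_0$ by its $c$-transform
\[
\tilde\psi_1(y):=\sup_{x\in\operatorname{supp}(\mu)}\bigl[\tilde c(x,y)-\varphi_0(x)-h_0(x)(y-x)\bigr],
\]
which satisfies $\tilde\psi_1\leq\tilde\psi_0$ with equality on the $y$-marginal of $\operatorname{supp}(\mathbb{Q}^*)$, does not increase the dual objective, and is a supremum of functions with Lipschitz constant at most $2L$ in $y$, hence is Lipschitz on $J$. A symmetric transform $\tilde\varphi_1(x):=\sup_{y\in J}[\tilde c(x,y)-\tilde\psi_1(y)-h_0(x)(y-x)]$ is Lipschitz in $x$ by the Lipschitz continuity of $\tilde c$ in $x$ together with the boundedness of $h_0$, and the triple $(\tilde\varphi_1,\tilde\psi_1,h_0)$ remains a dual minimizer. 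Undoing the concavifier by $\varphi:=\tilde\varphi_1$, $\psi:=\tilde\psi_1-u$, $h:=h_0$ delivers the triple required by the theorem.

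The main obstacle I anticipate is the circular-looking step that extracts boundedness of $h_0$ before Lipschitz regularity of $\tilde\psi_1$ is in hand: one has to exploit the fact that on $\operatorname{supp}(\mathbb{Q}^*)$ the concavity of $\tilde c(x,\cdot)$ alone controls the slope of any affine function that touches it from above at an interior point of $J$, so the bound $|h_0|\leq L$ really comes from $\tilde c$ and not from the unregularised $\tilde\psi_0$. A secondary technicality is the measurable selection of $h_0$ on the set where the superdifferential is multi-valued, for which a standard measurable-selection theorem applied to the superdifferential map $(x,y)\mapsto\partial_y\tilde c(x,y)$ suffices, together with a care to ensure that the transforms $\tilde\psi_1,\tilde\varphi_1$ remain everywhere finite on $J$ when $J$ is unbounded.
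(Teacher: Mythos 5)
This theorem is quoted verbatim from \cite{beiglbock2019dual} (it is Theorem 2.5 there); the paper gives no proof of its own, so your attempt can only be compared with the argument in that reference. Your overall architecture --- absorb the concavifier into $\tilde c=c+u$, bound $h$, then regularize $\psi$ and $\varphi$ by martingale analogues of $c$-transforms --- has the right shape, and the transform steps are sound once $h$ is bounded: a supremum of uniformly Lipschitz functions that is finite at one point is finite and Lipschitz everywhere, and your inequality/equality bookkeeping for $(\tilde\varphi_1,\tilde\psi_1,h_0)$ checks out.

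The genuine gap is exactly the step you flagged, and your proposed resolution does not close it. Subtracting the equality at $(x_0,y_0)$ from the feasibility inequality at $(x_0,y)$ gives
\[
h_0(x_0)(y-y_0)\;\geq\;\bigl[\tilde c(x_0,y)-\tilde\psi_0(y)\bigr]-\bigl[\tilde c(x_0,y_0)-\tilde\psi_0(y_0)\bigr],
\]
so the affine function with slope $h_0(x_0)$ touches $g(y):=\tilde c(x_0,y)-\tilde\psi_0(y)$ from above at $y_0$ --- not $\tilde c(x_0,\cdot)$ itself. At this stage $\tilde\psi_0$ is only known to be finite $\nu$-almost surely; it is neither convex nor Lipschitz, so $g$ is neither concave nor Lipschitz, and the concavity of $\tilde c(x_0,\cdot)$ alone gives no control on the slope of an affine majorant of $g$: if $\tilde\psi_0$ has an upward kink at $y_0$, then $g$ has a concave kink there and admits touching affine majorants whose slopes range over an arbitrarily wide interval, all compatible with global feasibility. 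This is consistent with the constants recorded in Remark \ref{remark_schmithals_lipschitz_bound} ($18\Lambda$ or $6\Lambda$ for $|h|$, and $19\Lambda/7\Lambda$, $17\Lambda/5\Lambda$ for $\varphi,\psi$), which are far larger than the $L$ and $2L$ your argument would yield: in \cite{beiglbock2019dual} the regularization of $\psi$ must come \emph{first}, exploiting the concavity of $y\mapsto\tilde c(x,y)-h(x)(y-x)$ together with the irreducible decomposition of $(\mu,\nu)$, and only afterwards is $h$ bounded in terms of the Lipschitz constant of the already-tamed $\psi$, with $\varphi$ controlled last. The bound on $h$ cannot be extracted from $\tilde c$ alone before $\psi$ has been regularized, so the order of your bootstrap must be reversed and the middle step re-proved.
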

    The following remark from \cite{schmithals2019contributions} gives specific Lipschitz bounds for the dual minimizers depending on the domain of the cost function.
    \begin{rem}
        (\cite{schmithals2019contributions})
        \label{remark_schmithals_lipschitz_bound}
        \begin{enumerate}
            \item If $c$ and $u$ in Theorem \ref{thm_lipschitz_bound} are Lipschitz continuous with constant $\Lambda$, then the dual minimizer may be chosen such that $\varphi$ and $\psi$ are Lipschitz continuous with constants $19\Lambda$ and $17\Lambda$ on $J$, and $|h|$ is bounded by $18\Lambda$ on $J$. This is computed in the proof of \cite{beiglbock2019dual}(Theorem 2.5).
            \item In a former version, the authors prove Theorem \ref{thm_lipschitz_bound} for compact $J$. Then the proof yields that the dual minimizer may be chosen such that $\varphi$ and $\psi$ are Lipschitz continuous with constants $7\Lambda$ and $5\Lambda$ on $J$, and $|h|$ is bounded by $6\Lambda$ on $J$.
            \item Analyzing the proof of \cite{beiglbock2019dual}(Theorem 2.5), the authors recognize that the global Lipschitz condition may be weakened. Instead, one demands that there is a $\Lambda>0$ such that for the domain $(I,J)$ of every irreducible component of $(\mu,\nu)$, we have
            \begin{itemize}
                \item $c_y(x,b-)+u'(b-)-c_y(x,a+)-u'(a+)\leq4\Lambda\quad \forall x\in I=(a,b).$
                \item $|c(x,y)-c(x',y)|\leq \Lambda|x-x'|\quad\forall x,x',y\in J.$
            \end{itemize}
        \end{enumerate}
    \end{rem}
 The following convergence result from \cite{schmithals2019contributions} provides a bound on the pricing error due to the availability of option prices at finitely many equally spaced strikes over the bounded support of the underlying measure.
    
    \begin{thm}\cite{schmithals2019contributions}
    \label{Thm_schmithals_convergence}
       Let $(\mu,\nu)\in \mathcal{P}_{K.L}^{\leq c}$. Let $c:\mathbb{R}_{+}^{2}\rightarrow\mathbb{R}$ be a Lipschitz continuous payoff function such that $c_{yy}$ exists. We denote by $\hat{\Lambda}$ the Lipschitz constant of $c$ and assume $\max\{\hat{\Lambda},\sup_{(x,y)\in\mathbb{R}^2}|c_{yy}(x,y)|\}\leq\Lambda$. then, for any $n\in \mathbb{N}$, we have
       \begin{align*}
           \bigg| \sup_{\mathbb{Q}\in\mathcal{M}_2(\mu_{n}^d,\nu_{n}^d)}\mathbb{E}_{\mathbb{Q}}[c(X,Y)]-\sup_{\mathbb{Q}\in\mathcal{M}_2(\mu,\nu)}\mathbb{E}_{\mathbb{Q}}[c(X,Y)]\bigg|\leq\frac{M_c}{2^n},
       \end{align*}
       where $M_c=(7K+5L).\tilde{\Lambda}$ with $\tilde{\Lambda}=\Lambda.\max\{L,1\}$. If we additionally suppose that $C_{\mu},C_{\nu}\in\mathcal{ C}^2(\mathbb{R}_{+}),$ then, for any $n\in\mathbb{N},$ we have
       \begin{align*}
           \bigg| \sup_{\mathbb{Q}\in\mathcal{M}_2(\mu_{n}^d,\nu_{n}^d)}\mathbb{E}_{\mathbb{Q}}[c(X,Y)]-\sup_{\mathbb{Q}\in\mathcal{M}_2(\mu,\nu)}\mathbb{E}_{\mathbb{Q}}[c(X,Y)]\bigg|\leq\frac{M_d}{2^{n+1}},
       \end{align*}
       where $M_d=(7T_{\mu}K^2+5T_{\nu}L^2).\tilde{\Lambda}$ with $T_{\mu}=\sup_{\kappa\in[0,K]}|C_{\mu}^{"}(\kappa)|$ and $T_{\nu}=\sup_{\lambda\in[0,L]}|C_{\nu}^{"}(\lambda)|$.
    \end{thm}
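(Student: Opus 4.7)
The plan is to reduce the comparison of two primal MOT values to a comparison of dual values, and then to bound the dual comparison using the Lipschitz dual representation of the Wasserstein distance combined with the explicit Wasserstein bounds from Theorem \ref{thm_bound_nicole}. Since $c$ is Lipschitz, it is continuous and of linear growth, so Corollary \ref{Corollary_upper_semicont} applies to both $(\mu,\nu)$ and $(\mu_n^d,\nu_n^d)$: we have $P_2^c(\mu,\nu)=D_2^c(\mu,\nu)$ and $P_2^c(\mu_n^d,\nu_n^d)=D_2^c(\mu_n^d,\nu_n^d)$. Non-emptiness of $\mathcal{M}_2(\mu_n^d,\nu_n^d)$ follows from $\mu_n^d\leq_c\nu_n^d$, which is built into the discretization.

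Next I would produce a $y$-concavifier with a controlled Lipschitz constant in order to invoke Theorem \ref{thm_lipschitz_bound} and Remark \ref{remark_schmithals_lipschitz_bound}. Since $|c_{yy}|\leq\Lambda$, the choice $u(y):=\tfrac{\Lambda}{2}y^2$ makes $y\mapsto c(x,y)+u(y)$ concave on $J=[0,L]$ uniformly in $x$, and $u$ is Lipschitz on $J$ with constant $\Lambda L$. Combined with the Lipschitz constant $\hat\Lambda\leq\Lambda$ of $c$, the effective constant for the dual-minimizer construction is $\tilde\Lambda=\Lambda\max\{L,1\}$. The compact-support part of Remark \ref{remark_schmithals_lipschitz_bound} then gives a dual minimizer $(\varphi,\psi,h)$ for $(\mu,\nu)$ with $\varphi$ and $\psi$ Lipschitz on $J$ with constants $7\tilde\Lambda$ and $5\tilde\Lambda$ respectively, and analogously a dual minimizer $(\varphi_n,\psi_n,h_n)$ for $(\mu_n^d,\nu_n^d)$.

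The comparison step is now standard. Since $(\varphi,\psi,h)\in\mathcal{D}_2^{\leq c}$ is also a super-hedging triple for $(\mu_n^d,\nu_n^d)$,
\begin{align*}
D_2^c(\mu_n^d,\nu_n^d)-D_2^c(\mu,\nu)\leq\bigl(\mathbb{E}_{\mu_n^d}[\varphi]-\mathbb{E}_\mu[\varphi]\bigr)+\bigl(\mathbb{E}_{\nu_n^d}[\psi]-\mathbb{E}_\nu[\psi]\bigr),
\end{align*}
and symmetrically, using $(\varphi_n,\psi_n,h_n)$, one obtains the reverse inequality. By the Kantorovich--Rubinstein duality \eqref{wasserstein3}, for any $L$-Lipschitz function $f$ and any pair of probability measures $\rho_1,\rho_2$ one has $|\mathbb{E}_{\rho_1}[f]-\mathbb{E}_{\rho_2}[f]|\leq \mathrm{Lip}(f)\cdot W(\rho_1,\rho_2)$. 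Applying this to $\varphi,\psi,\varphi_n,\psi_n$ and using Theorem \ref{thm_bound_nicole}, namely $W(\mu,\mu_n^d)\leq K/2^n$ and $W(\nu,\nu_n^d)\leq L/2^n$, yields
\begin{align*}
\bigl|P_2^c(\mu_n^d,\nu_n^d)-P_2^c(\mu,\nu)\bigr|\leq 7\tilde\Lambda\cdot\frac{K}{2^n}+5\tilde\Lambda\cdot\frac{L}{2^n}=\frac{(7K+5L)\tilde\Lambda}{2^n}=\frac{M_c}{2^n},
\end{align*}
which is the first claim. The refined bound under $C_\mu,C_\nu\in C^2(\mathbb{R}_+)$ follows by exactly the same argument, only substituting the sharper Wasserstein estimates $W(\mu,\mu_n^d)\leq T_\mu K^2/2^{n+1}$ and $W(\nu,\nu_n^d)\leq T_\nu L^2/2^{n+1}$ from the second part of Theorem \ref{thm_bound_nicole}.

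The main obstacle is verifying that the Lipschitz bounds promised by Remark \ref{remark_schmithals_lipschitz_bound} remain valid when the effective constant is $\tilde\Lambda=\Lambda\max\{L,1\}$ rather than $\Lambda$ itself; this requires a careful check that the concavifier $u(y)=\tfrac{\Lambda}{2}y^2$ and the cost $c$ satisfy a \emph{joint} Lipschitz condition with constant $\tilde\Lambda$ on $J\times J$, so that the constants $7\tilde\Lambda$ and $5\tilde\Lambda$ appear in the resulting dual minimizer bounds. Once this is in hand, both inequalities follow uniformly in $n$ and the theorem is established.
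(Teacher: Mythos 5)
Your proposal follows the same route the paper relies on (and reproduces almost verbatim in the proof of its own Theorem \ref{Our_thm_time_T_convergence}): primal--dual equality via Corollary \ref{Corollary_upper_semicont}, Lipschitz dual minimizers from Theorem \ref{thm_lipschitz_bound} and Remark \ref{remark_schmithals_lipschitz_bound} with effective constant $\tilde\Lambda=\Lambda\max\{L,1\}$, and then the Kantorovich--Rubinstein estimate combined with the Wasserstein bounds of Theorem \ref{thm_bound_nicole}. The only correction needed is the sign of the concavifier: concavity of $y\mapsto c(x,y)+u(y)$ requires $c_{yy}+u''\leq 0$, so you should take $u(y)=-\tfrac{\Lambda}{2}y^2$ rather than $+\tfrac{\Lambda}{2}y^2$; it has the same Lipschitz constant $\Lambda L$ on $[0,L]$ and leaves the rest of the argument unchanged.
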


   \textbf{ This brings us to our main result. We use similar techniques to the ones used in the proof of Theorem \ref{Thm_schmithals_convergence}. }
 \begin{thm}
    \label{Our_thm_time_T_convergence}
        Let $c:\mathbb{R}_{+}^{2}\rightarrow\mathbb{R}$ be a Lipschitz continuous payoff function.
       \begin{enumerate}
          \item  We further assume that for fixed $w_0,w_1,\dots,w_M\in\mathbb{R}$ there exists a Lipschitz function $u:\mathbb{R}_{+}\rightarrow\mathbb{R}$ such that $y\rightarrow|c(x,y)-\{w_0+\sum_{i=1}^{M}w_i(x-K_i)^+\}|+u(y)$ is concave on $\mathbb{R}_{+}$ for $\mu-$almost every $x\in \mathbb{R}_{+}$. Let $\hat{\Lambda}$ and $\hat{\Theta}$ denote the Lipschitz constants of $c$ and $u$ respectively and assume $\max\{\hat{\Lambda}+\sum_{i=1}^M |w_i|,\hat{\Theta}\}\leq\Lambda_w$. Then, for any $n\in \mathbb{N}$, we have
       \begin{align}
       \label{1st_convergence_inequality_unbdd_case}
          \begin{split}
             &\Bigg|\sup_{\mathbb{Q}\in\mathcal{M}_2(\mu_{n}^d,\nu_{n}^d)}\mathbb{E}_{\mathbb{Q}}\bigg[|c(X,Y)-\{w_0+\sum_{i=1}^{M}w_i(x-K_i)^+\}|\bigg]-\\
             &\sup_{\mathbb{Q}\in\mathcal{M}_2(\mu,\nu)}\mathbb{E}_{\mathbb{Q}}\bigg[|c(X,Y)-\{w_0+\sum_{i=1}^{M}w_i(x-K_i)^+\}|\bigg]\Bigg|\\
          &\leq19\Lambda_w W(\mu,\mu_{n}^d)+17\Lambda_w W(\nu,\nu_{n}^d). 
          \end{split}
       \end{align}
       Further, if the weights $w=(w_0,w_1,..,w_M)$ are restricted over a compact set $A\subset\mathbb{R}^{M+1}$ we have
       \begin{align}\label{ineq_unbdd_case}
          \begin{split}
             &\Bigg|\inf_w\sup_{\mathbb{Q}\in\mathcal{M}_2(\mu_{n}^d,\nu_{n}^d)}\mathbb{E}_{\mathbb{Q}}\bigg[|c(X,Y)-\{w_0+\sum_{i=1}^{M}w_i(x-K_i)^+\}|\bigg]\\
             &-  \inf_w\sup_{\mathbb{Q}\in\mathcal{M}_2(\mu,\nu)}\mathbb{E}_{\mathbb{Q}}\bigg[|c(X,Y)-\{w_0+\sum_{i=1}^{M}w_i(x-K_i)^+\}|\bigg]\Bigg|\\
          &\leq B\times( W(\mu,\mu_{n}^d)+ W(\nu,\nu_{n}^d)), 
          \end{split}
       \end{align}
       where $B = \max\{19\Lambda_w,17\Lambda_w\}$.
       \item Let us assume in addition that $(\mu,\nu)\in \mathcal{P}_{K.L}^{\leq c}$ and for fixed $w_0,w_1,\dots,w_M\in\mathbb{R}$ there exists a Lipschitz function $u:[0,L]\rightarrow\mathbb{R}$ such that $y\rightarrow|c(x,y)-\{w_0+\sum_{i=1}^{M}w_i(x-K_i)^+\}|+u(y)$ is concave on $[0,L]$ for $\mu-$almost every $x\in \mathbb{R}_{+}$. Let $\hat{\Lambda}$ and $\hat{\Theta}$ denote the Lipschitz constants of $c$ and $u$ respectively and assume $\max\{\hat{\Lambda}+\sum_{i=1}^M |w_i|,\hat{\Theta}\}\leq\Lambda_w$. Then, for any $n\in \mathbb{N}$, we have
       \begin{align}
        \label{convergence_inequality_bdd_case}
          \begin{split}
             &\Bigg|\sup_{\mathbb{Q}\in\mathcal{M}_2(\mu_{n}^d,\nu_{n}^d)}\mathbb{E}_{\mathbb{Q}}\bigg[|c(X,Y)-\{w_0+\sum_{i=1}^{M}w_i(x-K_i)^+\}|\bigg]\\
             &-  \sup_{\mathbb{Q}\in\mathcal{M}_2(\mu,\nu)}\mathbb{E}_{\mathbb{Q}}\bigg[|c(X,Y)-\{w_0+\sum_{i=1}^{M}w_i(x-K_i)^+\}|\bigg]\Bigg|\\
          &\leq7\Lambda_w W(\mu,\mu_{n}^d)+5\Lambda_w W(\nu,\nu_{n}^d). 
          \end{split}
       \end{align}
       Further, if the weights $w=(w_0,w_1,..,w_M)$ are restricted over a compact set $A\subset\mathbb{R}^{M+1}$ we have
       \begin{align}\label{ineq_bdd_case}
          \begin{split}
             &\Bigg|\inf_w\sup_{\mathbb{Q}\in\mathcal{M}_2(\mu_{n}^d,\nu_{n}^d)}\mathbb{E}_{\mathbb{Q}}\bigg[|c(X,Y)-\{w_0+\sum_{i=1}^{M}w_i(x-K_i)^+\}|\bigg]\\
             &-  \inf_w\sup_{\mathbb{Q}\in\mathcal{M}_2(\mu,\nu)}\mathbb{E}_{\mathbb{Q}}\bigg[|c(X,Y)-\{w_0+\sum_{i=1}^{M}w_i(x-K_i)^+\}|\bigg]\Bigg|\\
          &\leq B\times( W(\mu,\mu_{n}^d)+ W(\nu,\nu_{n}^d)), 
          \end{split}
       \end{align}
       where $B=\max\{7\Lambda_w,5\Lambda_w\}$. 
       \end{enumerate}
    \end{thm}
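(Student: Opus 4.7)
The plan is to reduce the statement to Theorem \ref{Thm_schmithals_convergence}-style reasoning by taking $\tilde{c}_w(x,y) := |c(x,y)-\{w_0+\sum_{i=1}^{M}w_i(x-K_i)^+\}|$ as the effective payoff for each fixed weight vector $w=(w_0,\dots,w_M)$, and then promoting the per-$w$ bound to a bound on $\inf_w$ by uniformity on the compact set $A$.

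First I would verify the hypotheses needed to invoke Theorem \ref{thm_lipschitz_bound}. Since $c$ is Lipschitz with constant $\hat\Lambda$ and the affine-plus-call term $w_0+\sum_i w_i(x-K_i)^+$ is Lipschitz in $x$ with constant $\sum_{i=1}^M|w_i|$ (and independent of $y$), the difference inside the absolute value is Lipschitz with constant at most $\hat\Lambda+\sum_i|w_i|$; the outer $|\cdot|$ does not increase the Lipschitz constant. The assumed $y$-concavifier $u$ for $\tilde c_w$ has Lipschitz constant $\hat\Theta$, and under the bound $\max\{\hat\Lambda+\sum_i|w_i|,\hat\Theta\}\le\Lambda_w$, Theorem \ref{thm_lipschitz_bound} (together with Theorem \ref{thm_dual_min_and_concavifier} for existence, and the linear growth of $c+u$ in the unbounded case) yields a dual minimizer $(\varphi_w,\psi_w,h_w)$ for the MOT problem with payoff $\tilde c_w$ and true marginals $(\mu,\nu)$. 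By Remark \ref{remark_schmithals_lipschitz_bound}(1), in the unbounded part the functions $\varphi_w,\psi_w$ may be chosen Lipschitz with constants $19\Lambda_w$ and $17\Lambda_w$; by Remark \ref{remark_schmithals_lipschitz_bound}(2), in the compactly supported setting of part 2 we get constants $7\Lambda_w$ and $5\Lambda_w$.

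Next I would use strong-weak duality to sandwich the difference of the two suprema. By Corollary \ref{Corollary_upper_semicont} and the existence of the dual minimizer we have, for the true marginals,
\begin{equation*}
\sup_{\mathbb{Q}\in\mathcal{M}_2(\mu,\nu)}\mathbb{E}_\mathbb{Q}[\tilde c_w]=\mathbb{E}_\mu[\varphi_w]+\mathbb{E}_\nu[\psi_w],
\end{equation*}
while the pointwise superhedging inequality $\varphi_w(x)+\psi_w(y)+h_w(x)(y-x)\ge\tilde c_w(x,y)$ is a property of $(\varphi_w,\psi_w,h_w)$ alone, so it is admissible for every pair of marginals in convex order. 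Taking expectations under any $\mathbb{Q}\in\mathcal{M}_2(\mu_n^d,\nu_n^d)$ and using the martingale property gives
\begin{equation*}
\sup_{\mathbb{Q}\in\mathcal{M}_2(\mu_n^d,\nu_n^d)}\mathbb{E}_\mathbb{Q}[\tilde c_w]\le\mathbb{E}_{\mu_n^d}[\varphi_w]+\mathbb{E}_{\nu_n^d}[\psi_w].
\end{equation*}
Subtracting and invoking the Kantorovich-Rubinstein dual (\ref{wasserstein3}) with Lipschitz constants $19\Lambda_w,17\Lambda_w$ yields one side of (\ref{1st_convergence_inequality_unbdd_case}). Reversing the roles by taking a dual minimizer of the discrete problem (its Lipschitz constants are governed by the same $\Lambda_w$ because $\tilde c_w$ and $u$ do not depend on the marginals) gives the other inequality, establishing (\ref{1st_convergence_inequality_unbdd_case}); the bounded-support case (\ref{convergence_inequality_bdd_case}) is identical with the tighter constants.

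Finally, to pass to (\ref{ineq_unbdd_case}) and (\ref{ineq_bdd_case}), I would exploit the elementary fact that if $|f(w)-g(w)|\le C$ holds uniformly in $w\in A$, then $|\inf_A f-\inf_A g|\le C$. Because $w\mapsto\Lambda_w$ is bounded on the compact set $A\subset\mathbb{R}^{M+1}$ (being dominated by a continuous function of $w$), choosing $B=\max\{19\Lambda_w,17\Lambda_w\}$ (respectively $\max\{7\Lambda_w,5\Lambda_w\}$) uniformly over $A$ transfers (\ref{1st_convergence_inequality_unbdd_case}) (respectively (\ref{convergence_inequality_bdd_case})) into the claimed bound on the infima. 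The main obstacle I anticipate is the careful verification that the dual minimizer and its Lipschitz control transfer cleanly across marginals so that the same $(\varphi_w,\psi_w,h_w)$ serves simultaneously for $(\mu,\nu)$ and $(\mu_n^d,\nu_n^d)$; this relies crucially on the fact that the superhedging inequality is a pointwise statement on $\mathbb{R}^2$ and that $\Lambda_w$ is determined by $c$, $u$, and $w$ alone, not by the marginals.
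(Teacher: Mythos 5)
Your proposal follows essentially the same route as the paper's proof: reduce to the fixed-$w$ payoff $\tilde c_w$, invoke Corollary \ref{Corollary_upper_semicont} for duality, obtain Lipschitz-controlled dual minimizers via Theorem \ref{thm_lipschitz_bound} and Remark \ref{remark_schmithals_lipschitz_bound}, plug the optimizer of one problem into the other and apply the Kantorovich--Rubinstein representation (\ref{wasserstein3}), then pass to the infima over the compact set $A$ via $|\inf f_n-\inf f|\le\sup_w|f_n-f|$. The argument is correct and matches the paper's proof in all essential steps.
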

  \begin{proof}
       
   Since $c(x,y)$ is Lipschitz, it readily follows that the absolute hedging error function $|c(x,y)-\{w_0+\sum_{i=1}^{M}w_i(x-K_i)^+\}|$ is also Lipschitz continuous on $\mathbb{R}^2_{+}=\text{conv(supp(}\nu))\times\text{conv(supp(}\nu))$ with Lipschitz constant $\hat{\Lambda}+\sum_{i=1}^{M}|w_i|$. Following the proof of Theorem \ref{Thm_schmithals_convergence} from \cite{schmithals2019contributions}, we apply Corollary \ref{Corollary_upper_semicont} to get
    \begin{align*}
       &\sup_{\mathbb{Q}\in\mathcal{M}_2(\mu_{n}^d,\nu_{n}^d)}\mathbb{E}_{\mathbb{Q}}\bigg[|c(X,Y)-\{w_0+\sum_{i=1}^{M}w_i(x-K_i)^+\}|\bigg]\\
       &=\inf_{(\varphi,\psi,h)\in\mathcal{D}_{2}^{\geq c}}\Bigg\{\int_{\mathbb{R}_+}\varphi(x)\mu_{n}^d(dx)+\int_{\mathbb{R}_+}\psi(x)\nu_{n}^d(dy)\Bigg\},
       \end{align*}
       and
       \begin{align*}
       &\sup_{\mathbb{Q}\in\mathcal{M}_2(\mu,\nu)}\mathbb{E}_{\mathbb{Q}}\bigg[|c(X,Y)-\{w_0+\sum_{i=1}^{M}w_i(x-K_i)^+\}|\bigg]\\
       &=\inf_{(\varphi,\psi,h)\in\mathcal{D}_{2}^{\geq c}}\Bigg\{\int_{\mathbb{R}_+}\varphi(x)\mu(dx)+\int_{\mathbb{R}_+}\psi(x)\nu(dy)\Bigg\} .
    \end{align*}
    Under the assumption that there exists a Lipschitz function $u:\mathbb{R}_{+}=\text{conv(supp(}\nu))\rightarrow\mathbb{R}$ such that $y\rightarrow|c(x,y)-\{w_0+\sum_{i=1}^{M}w_i(x-K_i)^+\}|+u(y)$ is concave on $\mathbb{R}_+$ for $\mu-$almost every $x\in \mathbb{R}_{+}$, we can apply Theorem \ref{thm_lipschitz_bound}.  This gives us solutions $(\varphi^*,\psi^*,h^*)$ and $(\varphi_{n}^*,\psi_{n}^*,h_{n}^*)$ for the dual problems with respect to $(\mu,\nu)$ and $(\mu_{n}^d,\nu_{n}^d)$ respectively. Applying Remark \ref{remark_schmithals_lipschitz_bound}, $\varphi^*$ and $\varphi_{n}^*$ are Lipschitz continuous with constant $19\Lambda_w$, and $\psi^*$ and $\psi_{n}^*$ are Lipschitz continuous with constant $17\Lambda_w$ where $\Lambda_w$ is the maximum of the Lipschitz constants of $c(x,y)-\{w_0+\sum_{i=1}^{M}w_i(x-K_i)^+\}|$ and $u(y)$. This gives
    \begin{align*}
&\sup_{\mathbb{Q}\in\mathcal{M}_2(\mu_{n}^d,\nu_{n}^d)}\mathbb{E}_{\mathbb{Q}}\bigg[|c(X,Y)-\{w_0+\sum_{i=1}^{M}w_i(x-K_i)^+\}|\bigg]\\
&-  \sup_{\mathbb{Q}\in\mathcal{M}_2(\mu,\nu)}\mathbb{E}_{\mathbb{Q}}\bigg[|c(X,Y)-\{w_0+\sum_{i=1}^{M}w_i(x-K_i)^+\}|\bigg]\\
      &=\inf_{(\varphi,\psi,h)\in\mathcal{D}_{2}^{\geq c}}\Bigg\{\int_{\mathbb{R}_+}\varphi(x)\mu_{n}^d(dx)+\int_{\mathbb{R}_+}\psi(x)\nu_{n}^d(dy)\Bigg\}\\
      &-\inf_{(\varphi,\psi,h)\in\mathcal{D}_{2}^{\geq c}}\Bigg\{\int_{\mathbb{R}_+}\varphi(x)\mu(dx)+\int_{\mathbb{R}_+}\psi(x)\nu(dy)\Bigg\}\\
      &\leq \int_{\mathbb{R}_+}\varphi^*(x)\mu_{n}^d(dx)+\int_{\mathbb{R}_+}\psi^*(x)\nu_{n}^d(dy)\\
      &-\bigg(\int_{\mathbb{R}_+}\varphi^*(x)\mu^d(dx)+\int_{\mathbb{R}_+}\psi^*(x)\nu^d(dy)\bigg)\\
      &=\int_{\mathbb{R}_+}\varphi^*(x)(\mu_{n}^d-\mu)(dx)+\int_{\mathbb{R}_+}\psi^*(x)(\nu_{n}^d-\nu)(dy)\\
      &\leq19\Lambda_w W(\mu,\mu_{n}^d)+17\Lambda_w W(\nu,\nu_{n}^d).
    \end{align*}
    Analogously using $\varphi_{n}^*$ and $\psi_{n}^*$ in the first inequality instead of $\varphi^*$ and $\psi^*$ one obtains
    \begin{align*}
       & \sup_{\mathbb{Q}\in\mathcal{M}_2(\mu,\nu)}\mathbb{E}_{\mathbb{Q}}\bigg[|c(X,Y)-\{w_0+\sum_{i=1}^{M}w_i(x-K_i)^+\}|\bigg]\\
       &-\sup_{\mathbb{Q}\in\mathcal{M}_2(\mu_{n}^d,\nu_{n}^d)}\mathbb{E}_{\mathbb{Q}}\bigg[|c(X,Y)-\{w_0+\sum_{i=1}^{M}w_i(x-K_i)^+\}|\bigg] \\
        &\leq19\Lambda_w W(\mu,\mu_{n}^d,)+17\Lambda_w W(\nu,\nu_{n}^d).
    \end{align*}

   This proves (\ref{1st_convergence_inequality_unbdd_case}). To prove the second part of the statement $1$ in Theorem \ref{Our_thm_time_T_convergence}, let 
    \begin{align*}
        f_n(w)= \sup_{\mathbb{Q}\in\mathcal{M}_2(\mu_{n}^d,\nu_{n}^d)}\mathbb{E}_{\mathbb{Q}}\bigg[|c(X,Y)-\{w_0+\sum_{i=1}^{M}w_i(x-K_i)^+\}|\bigg],
    \end{align*} 
    and 
    \begin{align*}
     f(w)=\sup_{\mathbb{Q}\in\mathcal{M}_2(\mu,\nu)}\mathbb{E}_{\mathbb{Q}}\bigg[|c(X,Y)-\{w_0+\sum_{i=1}^{M}w_i(x-K_i)^+\}|\bigg]    
    \end{align*} 
    where $w=(w_0,w_1,\dots,w_M)\in A$. This gives
    \begin{align*}
        &f_n(w)-f(w)\leq|f_n(w)-f(w)|\leq \sup_w |f_n(w)-f(w)|,\\
        &\implies f_n(w)\leq f(w)+\sup_w |f_n(w)-f(w)|,\\
        &\implies \inf_w f_n(w)\leq \inf_w f(w) + \sup_w |f_n(w)-f(w)|,\\
        &\implies \inf_w f_n(w)-\inf_wf(w)\leq \sup_w |f_n(w)-f(w)|\\
        &\leq \sup_w (19\Lambda_w W(\mu,\mu_{n}^d)+17\Lambda_w W(\nu,\nu_{n}^d))\\&= B\times( W(\mu,\mu_{n}^d)+W(\nu,\nu_{n}^d)),
    \end{align*}
  where $B=\sup_w\{19\Lambda_w,17\Lambda_w\}$, with the supremum being taken over the compact set over which $w$ takes values. Since $\Lambda_w$ is the maximum of the Lipschitz constants of 
$c(x,y)-\{w_0+\sum_{i=1}^{M}w_i(x-K_i)^+\}|$ and $u(y)$, the supremum of these over a compact set is also a finite value.

   Similarly, we obtain 
   \begin{align*}
        \inf_wf(w) - \inf_w f_n(w)\leq B\times( W(\mu,\mu_{n}^d)+W(\nu,\nu_{n}^d)),
   \end{align*}
   which gives us the desired inequality (\ref{ineq_unbdd_case}). The proof of (\ref{ineq_bdd_case}) is similar and hence we omit it.\end{proof}

   \end{section}
\section{Numerical Results}\label{Sec:Numerical examples}
In this section, we focus on highlighting the utility of our \textit{min-max} hedging approach (\ref{MoMOT_hedge_t1}) when the option prices are generated using a \textit{Black-Scholes} (BS) and a \textit{Merton Jump diffusion}(MJD) model.

\subsection{Black Scholes Model} \label{subsec:BS_exp}
\subsubsection{Asian Option}
We consider an Asian option with payoff $c(X,Y)=(\frac{1}{2}(X+Y)-K)^+$. The parameters are : $ S_0=1=K,\sigma=0.2,\mu=0=r,t_1=0.5,T=1$. The price of the option obtained using Monte Carlo simulations with $ 10^5$ paths is $0.06284$.

We use a non-uniform grid of $12$ discretization points [0,         0.65985287, 0.69305573, 0.83860362, 0.86371482, 0.97595447,1.00102542,1.0484879,  1.09459717, 1.15062857, 1.57436388, 2] centered around the spot price $S_0$ generated from a normal distribution with variance $\sigma=0.2$ and mean $S_0$. The corresponding option prices for maturities $t_1$ and $T$ are used to calculate the discretely supported marginal distributions.

In figure \ref{fig:cond_plots_BSAsian} we plot the conditional value of the target option at $t_1$  using the joint probabilities $p_{i,j}$ obtained using the min-max problem for the different discretization points $x_i,i=0,..,M$ at $t_1$ given in the x-axis, denoted by the blue line. The true BS prices of the options at these discretization points are given by the orange lines, and the green lines denote the hedging portfolio values. 
\begin{figure}
     \centering
     \begin{subfigure}[b]{0.4\textwidth}
         \centering
         \includegraphics[width=\textwidth]{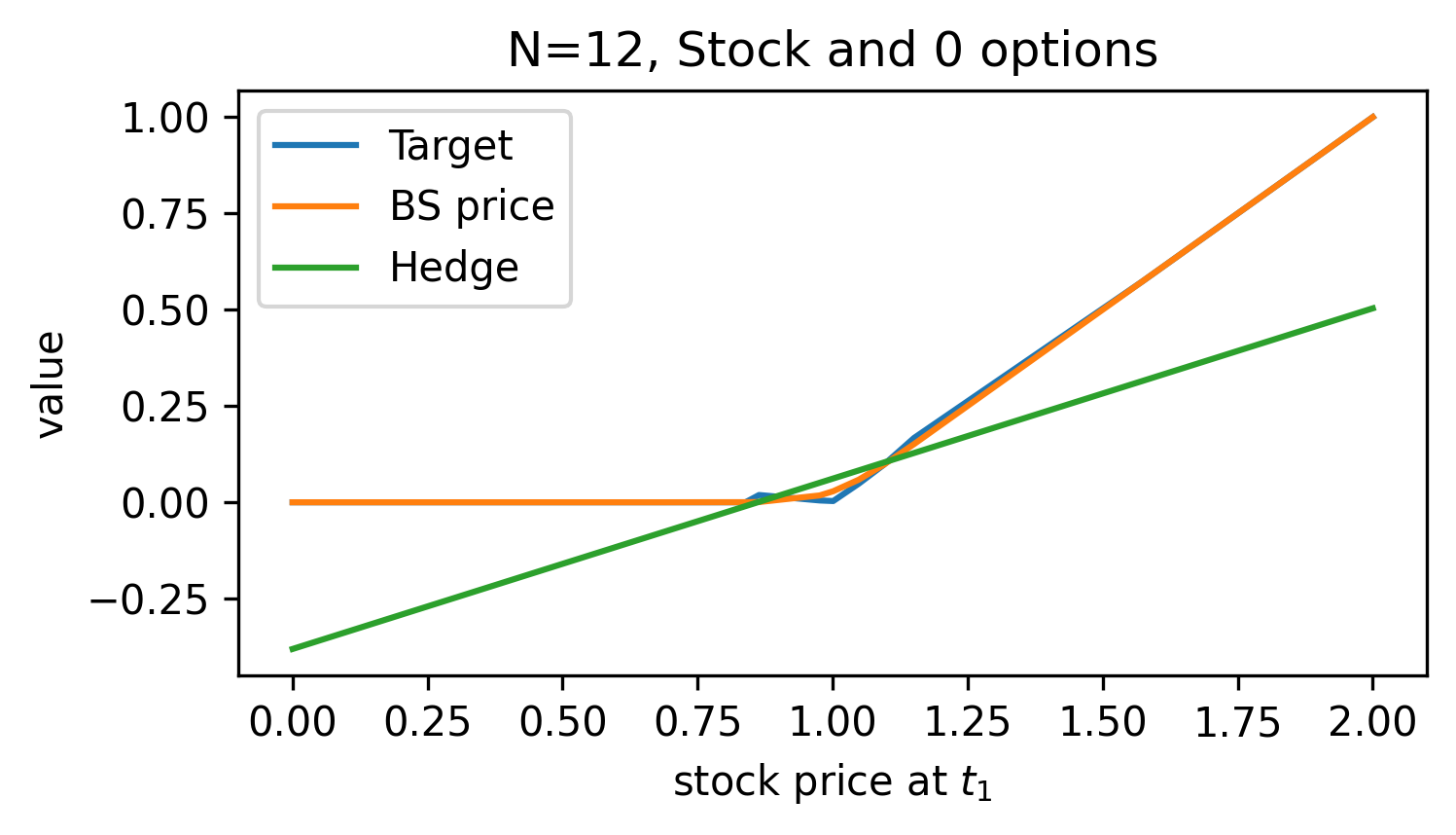}
         \caption{Plot for only stock }
         \label{fig:1option_condplot_bs_asian}
     \end{subfigure}
\hfill
     \begin{subfigure}[b]{0.4\textwidth}
         \centering
         \includegraphics[width=\textwidth]{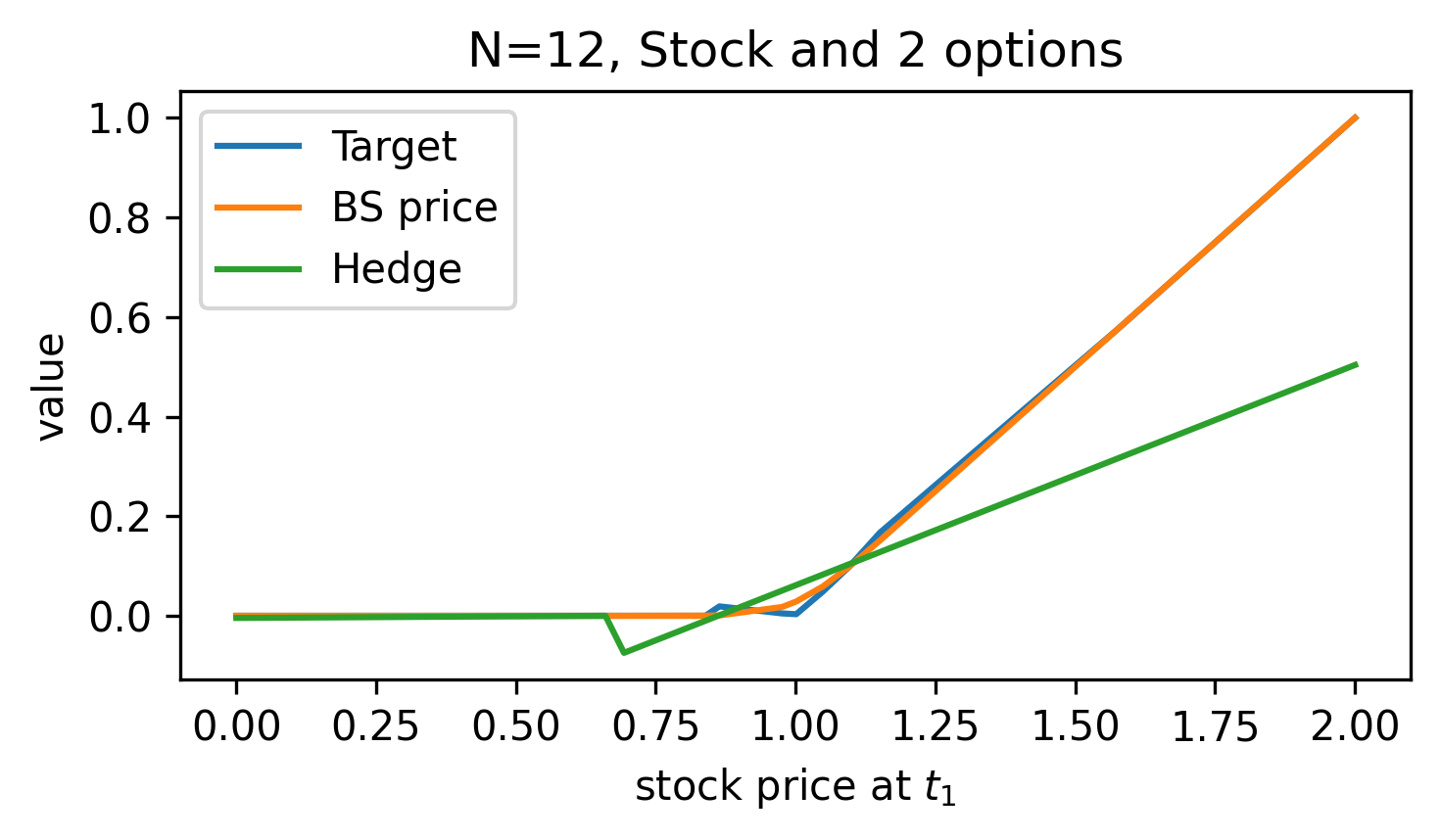}
         \caption{Plot for stock and 2 options}
         \label{fig:2option_condplot_bs_asian}
     \end{subfigure}
\hfill
     \begin{subfigure}[b]{0.4\textwidth}
         \centering
         \includegraphics[width=\textwidth]{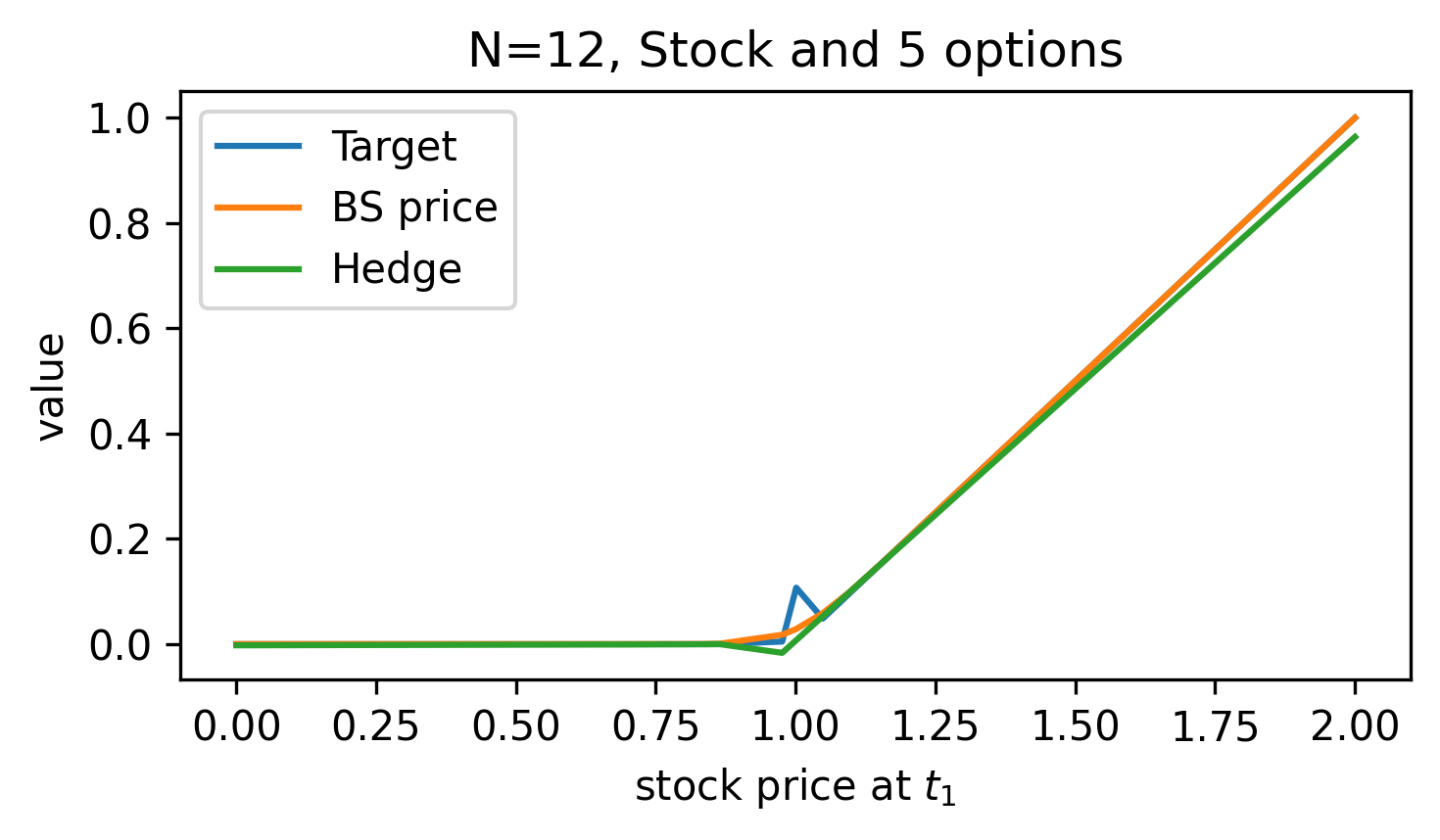}
         \caption{Plot for stock and 5 options}
         \label{fig:6option_condplot_bs_asian}
     \end{subfigure}
     \hfill
     \begin{subfigure}[b]{0.4\textwidth}
         \centering
         \includegraphics[width=\textwidth]{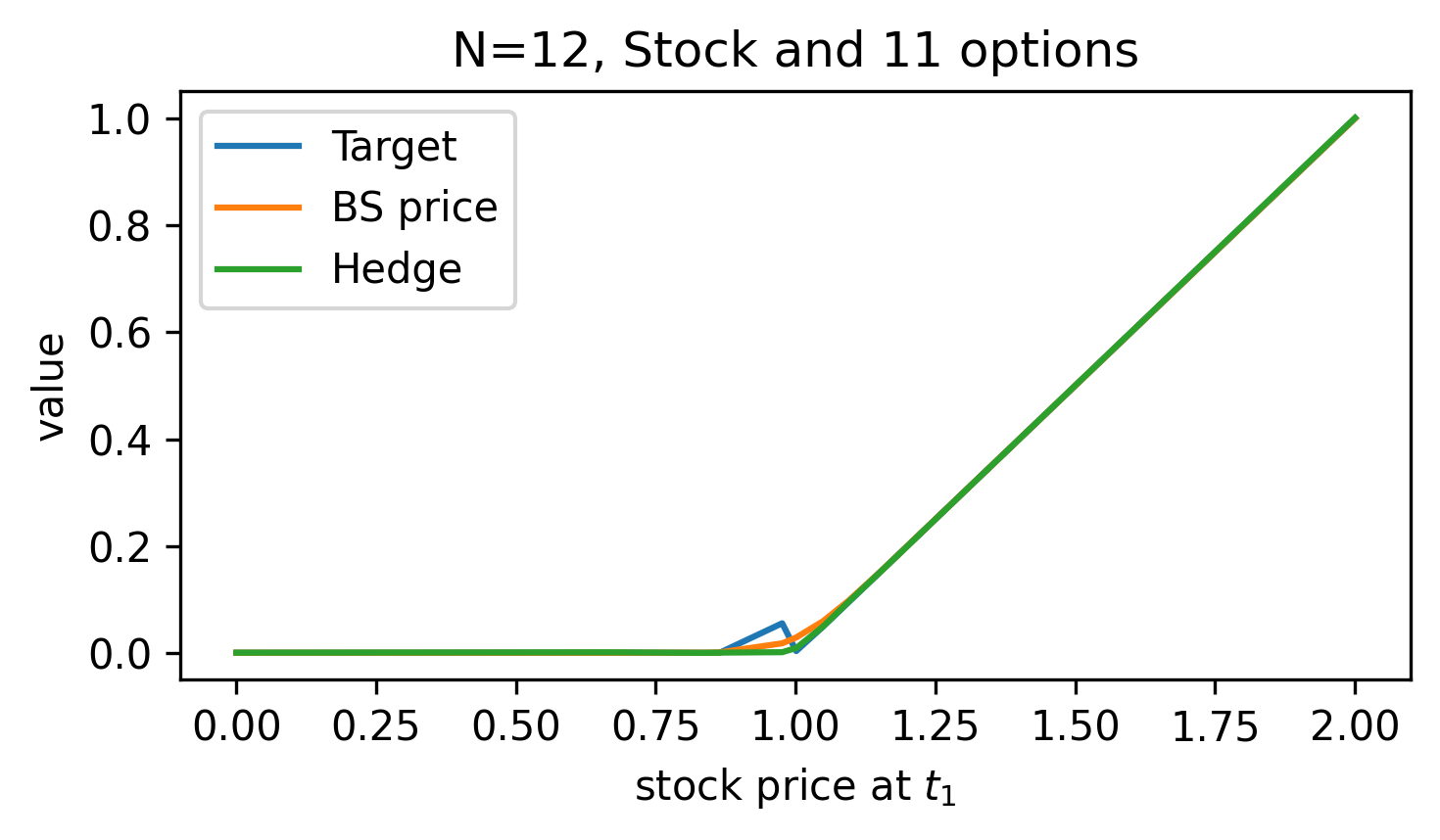}
         \caption{Plot for stock and 11 options}
         \label{fig:12option_condplot_bs_asian}
     \end{subfigure}
        \caption{Plots of the conditional value of the target option under the worst possible scenarios and the corresponding hedging portfolio values for an increasing number of options for the Asian option under the Black Scholes model.}
        \label{fig:cond_plots_BSAsian}
\end{figure}

We observe that the conditional target option value obtained using the $p_{i,j}$ and the true call option price under the Black Scholes model mostly coincide for all cases. The hedging portfolio value starts aligning with the conditional target option value with an increasing number of options, starting from $6$ options. This indicates that holding as few as $6$ options in the hedging portfolio for the given choice of strikes and parameters provides a considerable reduction in the hedging error.

\subsubsection{Forward Start Option}
\textbf{Effect of increasing number of options in the hedging portfolio :}
We consider a forward start option with payoff $c(X,Y)=(Y-X)^+$. The parameters are : $S_0=1,\sigma=0.2,\mu=0=r,t_1=0.5,T=1$. The price of the option obtained using Monte Carlo simulations with $ 10^5$ paths is $ 0.05647$.

We use the same non-uniform grid of $12$ discretization points to calculate the marginal distributions.

The plots in Figure \ref{fig:cond_plots_BSForward} show that neither the hedging portfolio (green line) nor the target option price obtained using the worst case probabilities (blue line) align with the true price given by the original line. However, the addition of more options ( $\geq4$)  yields a better fit for the target option to the true price than with fewer options ($\leq3)$.
\begin{figure}
     \centering
     \begin{subfigure}[b]{0.4\textwidth}
         \centering
         \includegraphics[width=\textwidth]{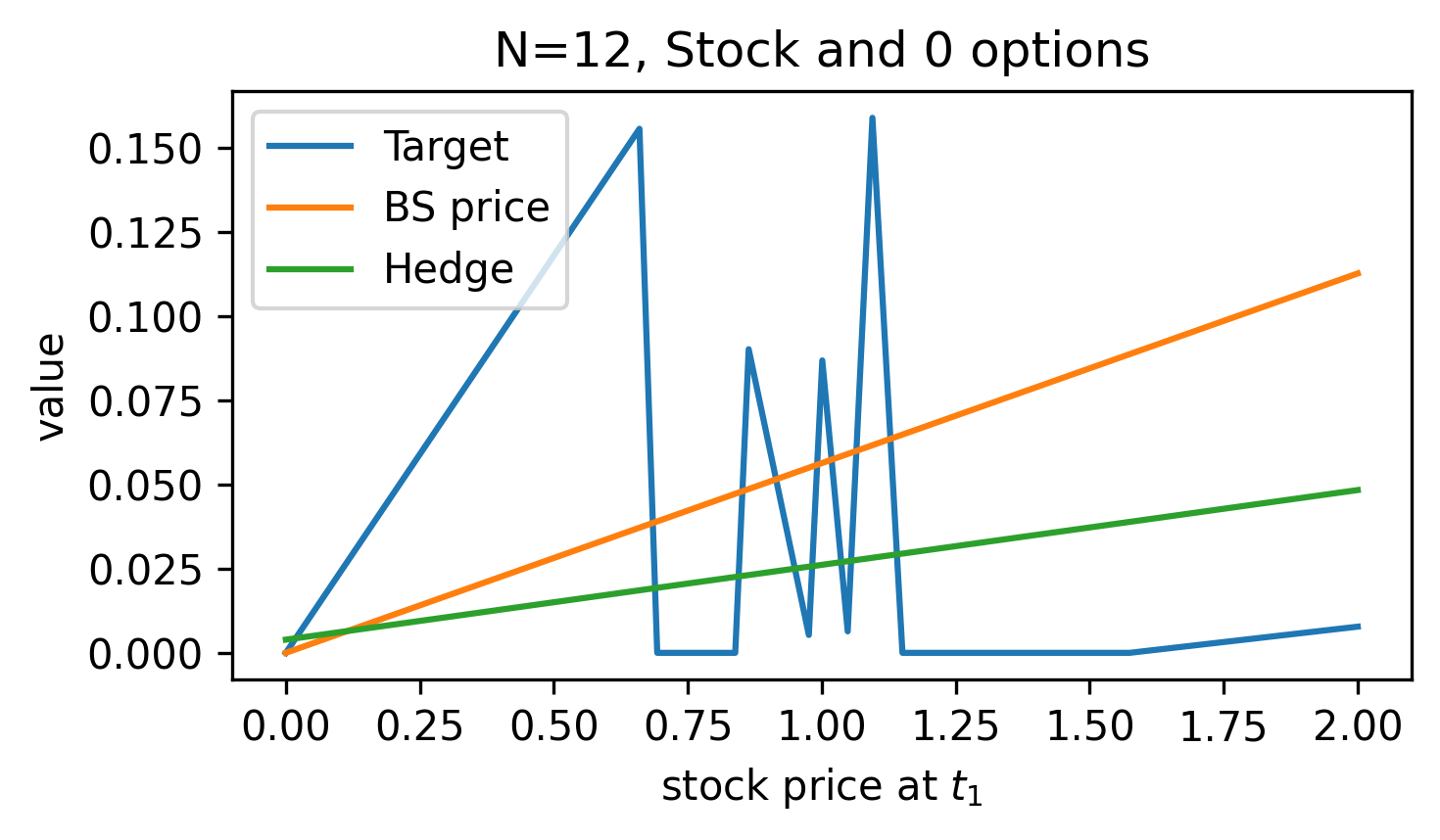}
         \caption{Plot for 1 option}
         \label{fig:1option_condplot_bs_forward}
     \end{subfigure}
     \hfill
     \begin{subfigure}[b]{0.4\textwidth}
         \centering
         \includegraphics[width=\textwidth]{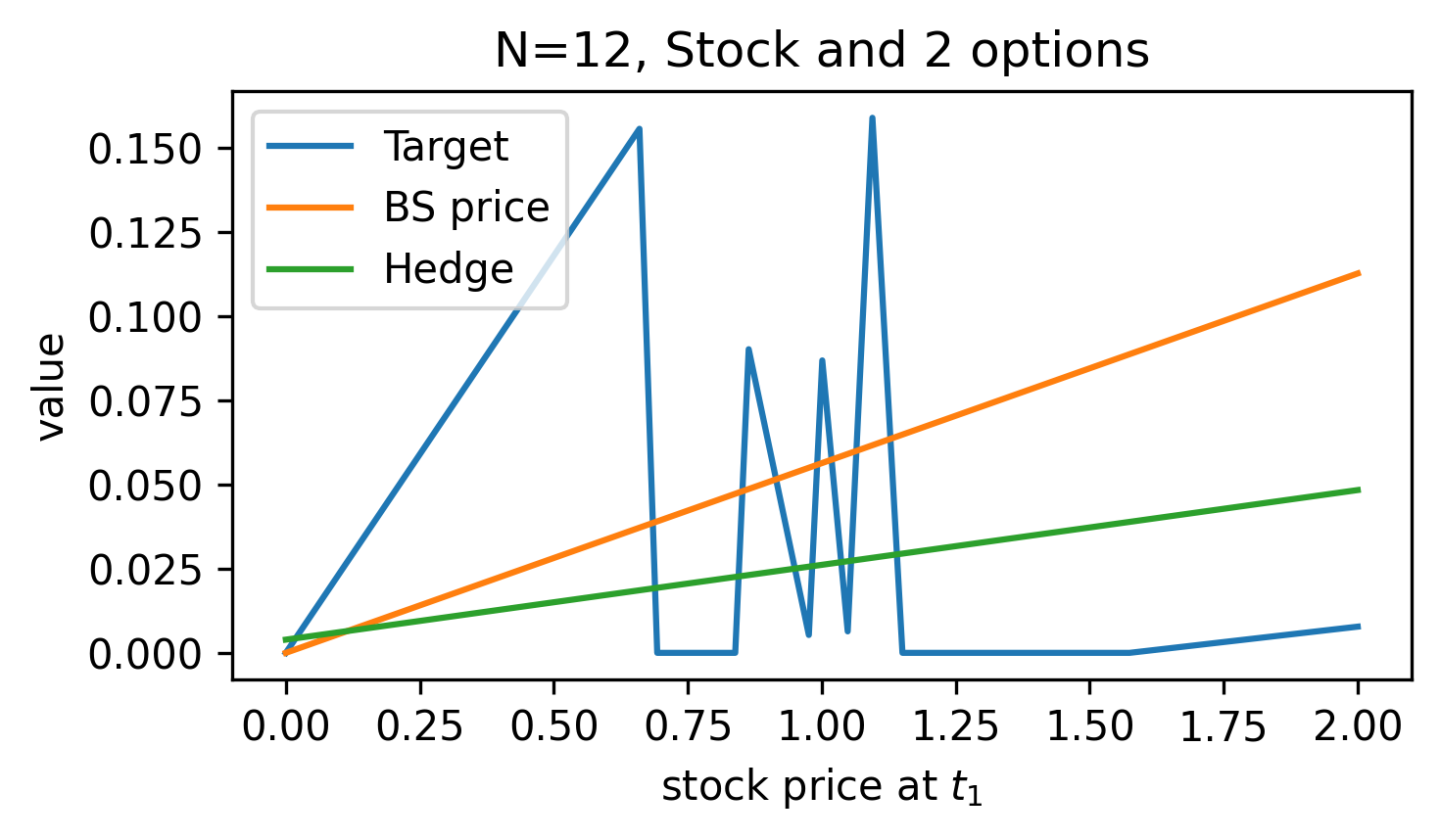}
         \caption{Plot for 3 options}
         \label{fig:3option_condplot_bs_forward}
     \end{subfigure}
     \hfill
     \begin{subfigure}[b]{0.4\textwidth}
         \centering
         \includegraphics[width=\textwidth]{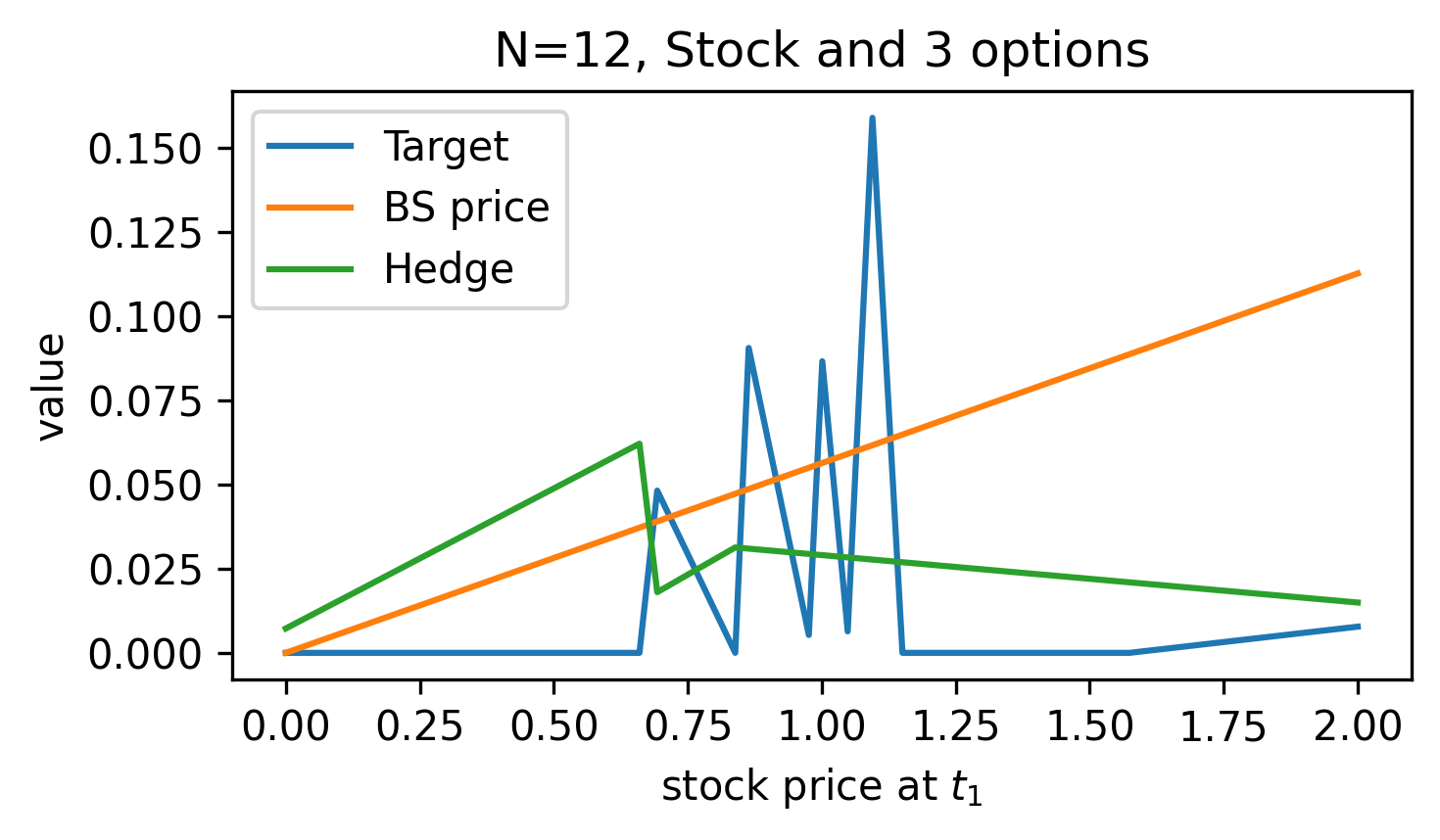}
         \caption{Plot for 4 options}
         \label{fig:4option_condplot_bs_forward}
     \end{subfigure}
     \hfill
     \begin{subfigure}[b]{0.4\textwidth}
         \centering
         \includegraphics[width=\textwidth]{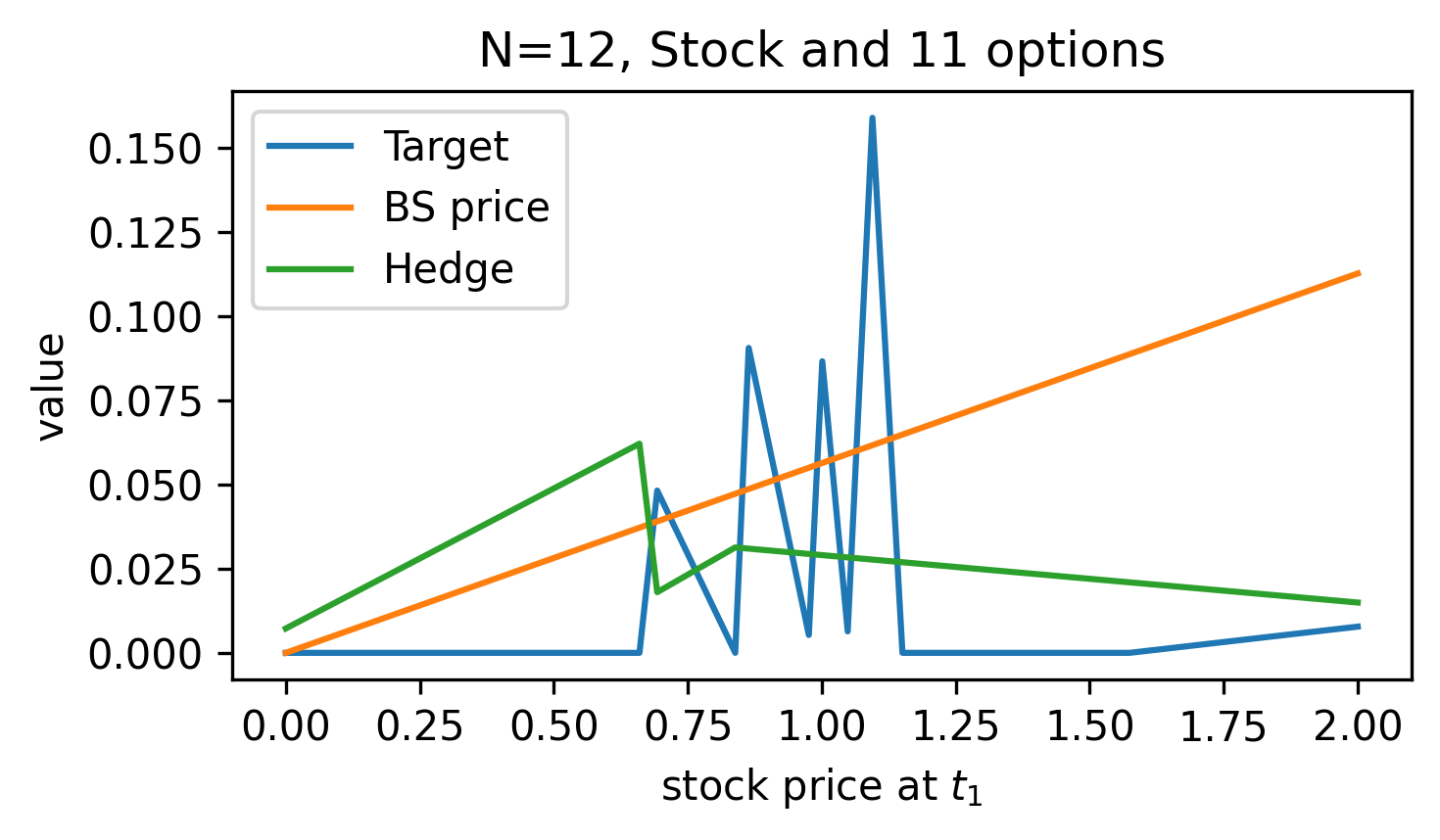}
         \caption{Plot for 12 options}
         \label{fig:12option_condplot_bs_forward}
     \end{subfigure}
        \caption{Plots of the conditional value of the target option under the worst possible scenarios and the corresponding hedging portfolio values for an increasing number of options for the Forward start option under the Black Scholes model.}
        \label{fig:cond_plots_BSForward}
\end{figure}

A natural question then is to study the performance obtained using the dual superhedging approach (\ref{dual_1dim}) and compare it with our min-max hedge throughout the duration of the hedge until maturity $t_1$. We postpone this till subsection \ref{subsec:forward_BS_Tmaturity_simulations}, where we use simulated stock paths to study the performance of the respective hedging algorithms.

\subsection{Merton Jump Diffusion Model}

Since we obtain similar results to those for the BS model scenario, we restrict our attention here to the forward start option only.

We consider a forward start option with payoff $c(X,Y)=(Y-X)^+$. The parameters are : $S_0=1,\sigma=0.2,\mu=0=r,\sigma=0.14,\mu_j=-0.1,\sigma_j=0.13,t_1=0.5,T=1$. The price of the option obtained using Monte Carlo simulations with $ 10^5$ paths is $ 0.07063$. We use the same non-uniform grid of $12$ discretization points as for BS model to calculate the marginal distributions.

\subsubsection{Forward Start Option}
\begin{figure}
     \centering
     \begin{subfigure}[b]{0.4\textwidth}
         \centering
         \includegraphics[width=\textwidth]{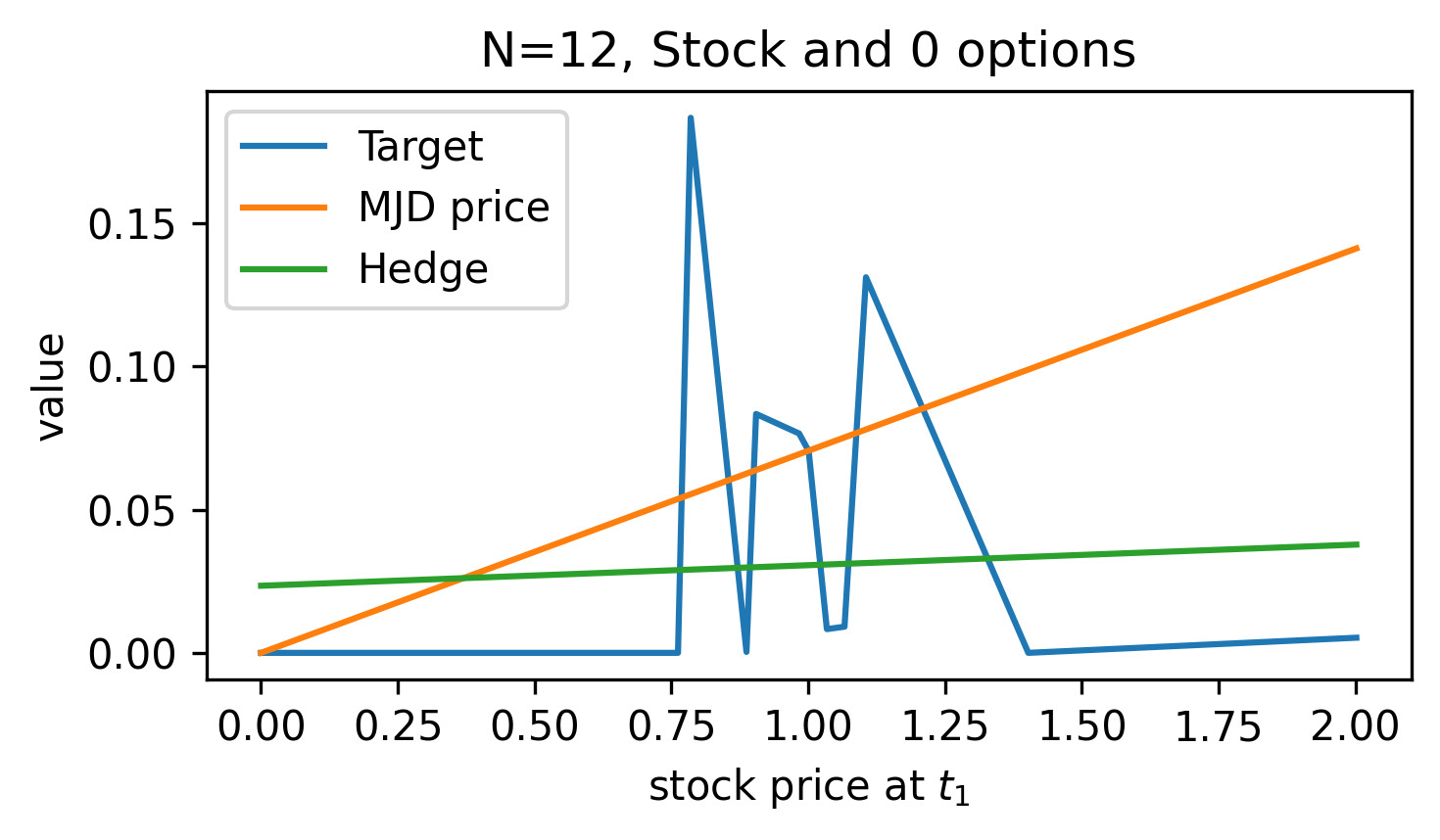}
         \caption{Plot for only stock }
         \label{fig:0option_condplot_mjd_forward}
     \end{subfigure}
     \hfill
     \begin{subfigure}[b]{0.4\textwidth}
         \centering
         \includegraphics[width=\textwidth]{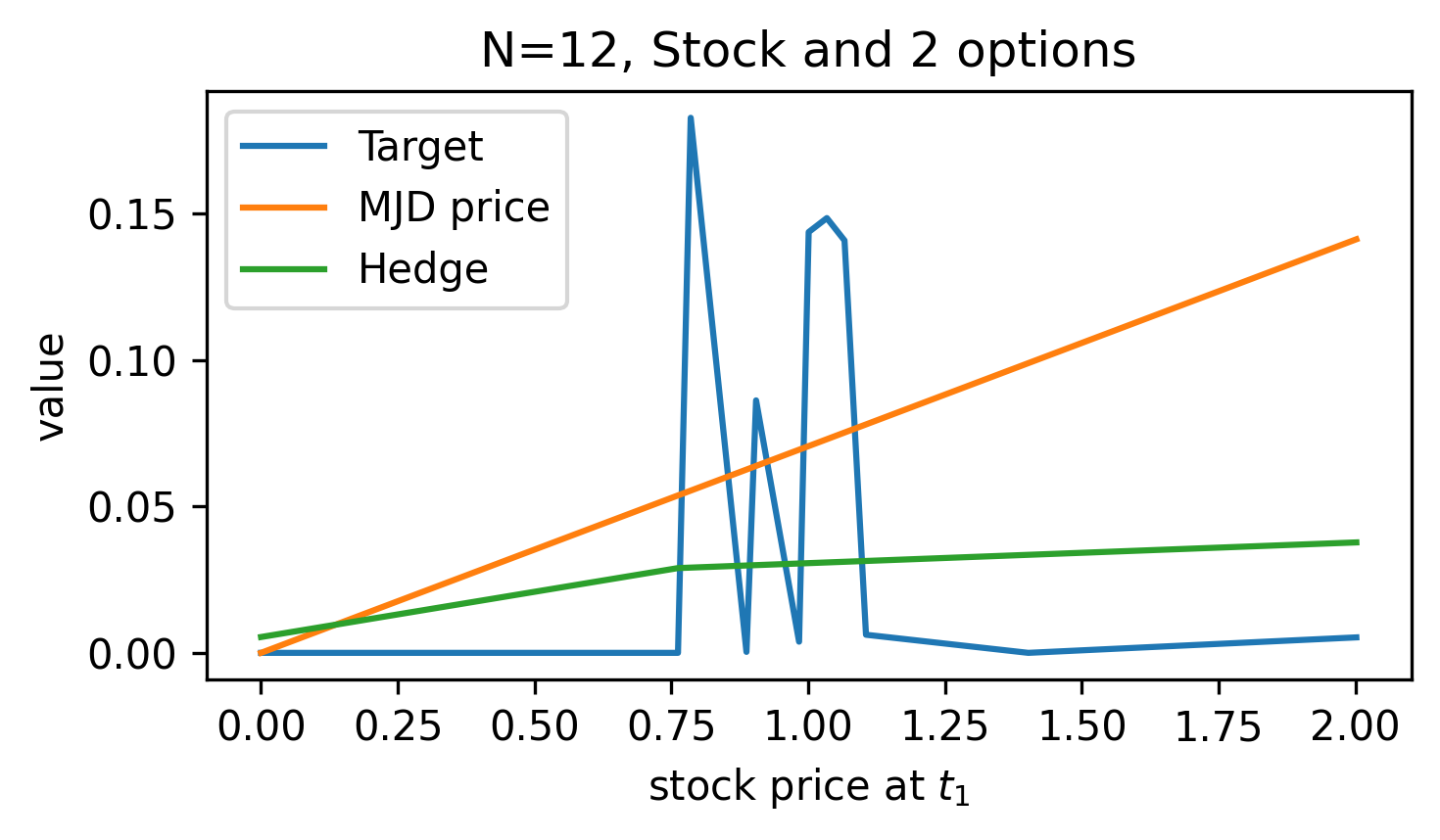}
         \caption{Plot for stock and 2 options}
         \label{fig:2option_condplot_mjd_forward}
     \end{subfigure}
     \hfill
     \begin{subfigure}[b]{0.4\textwidth}
         \centering
         \includegraphics[width=\textwidth]{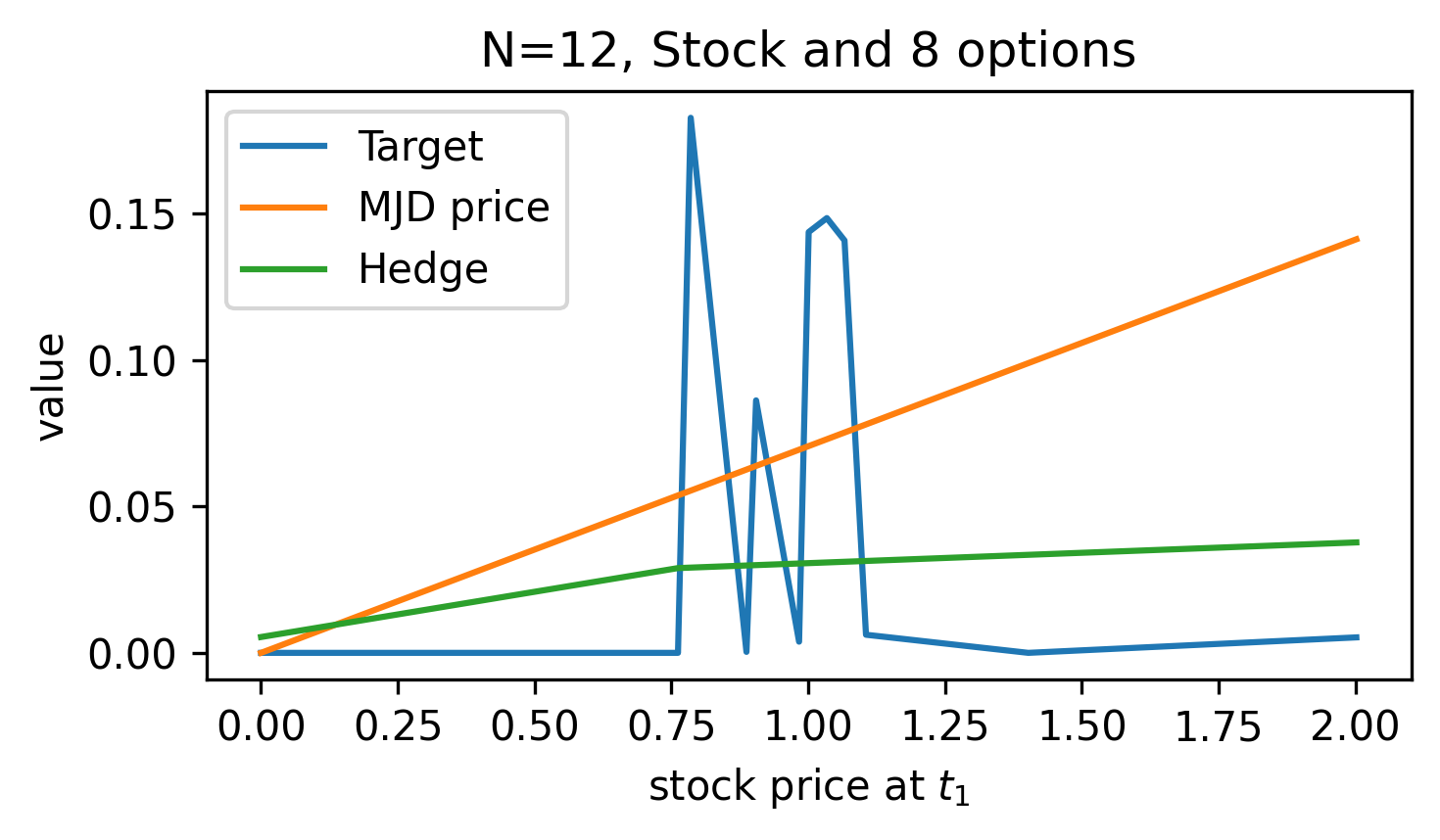}
         \caption{Plot for stock and 8 options}
         \label{fig:8option_condplot_mjd_forward}
     \end{subfigure}
     \hfill
     \begin{subfigure}[b]{0.4\textwidth}
         \centering
         \includegraphics[width=\textwidth]{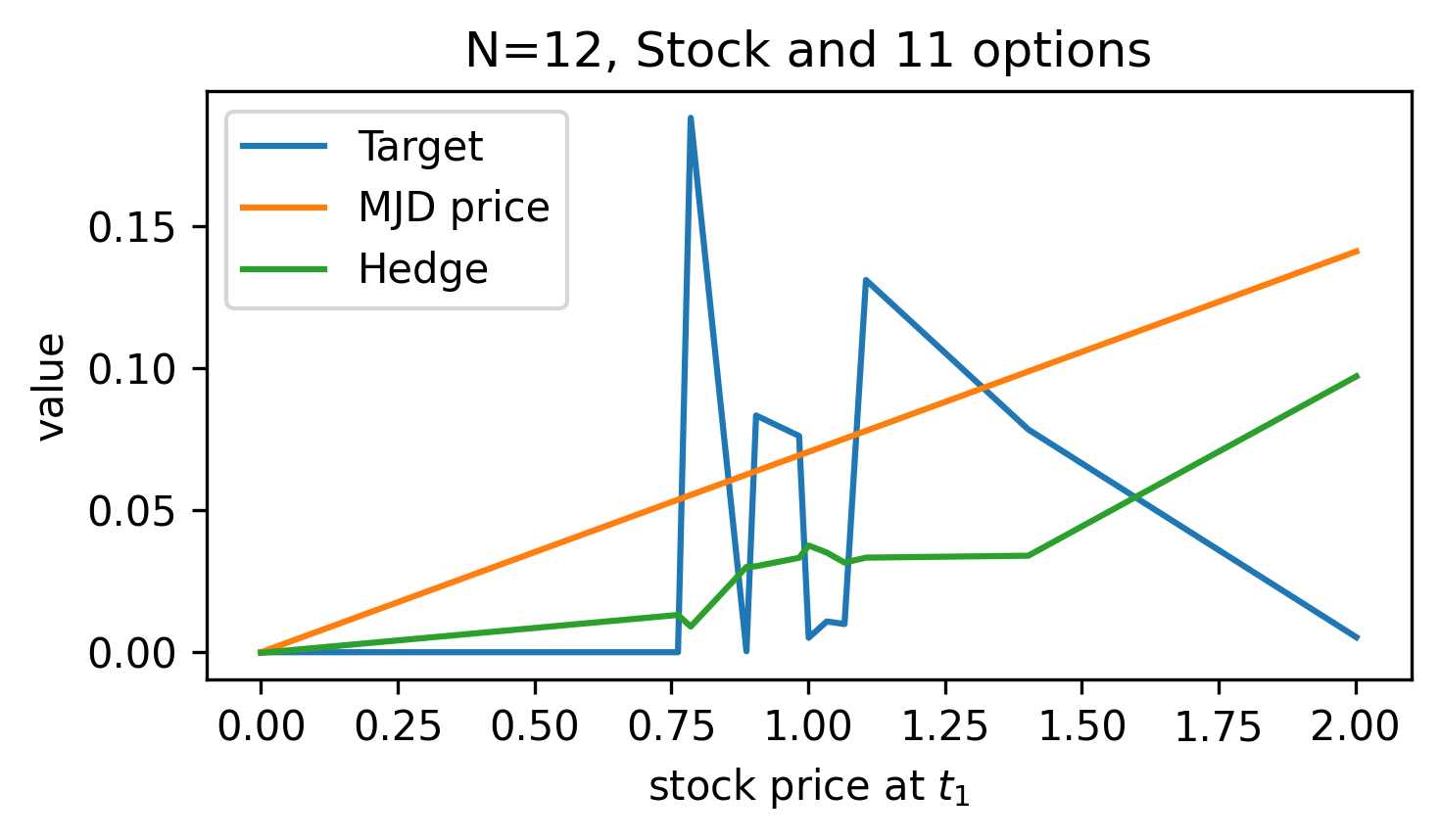}
         \caption{Plot for stock and 11 options}
         \label{fig:11option_condplot_mjd_forward}
     \end{subfigure}
        \caption{Plots of the conditional value of the target option under the worst possible scenarios and the corresponding hedging portfolio values for an increasing number of options for the Forward start option under the Merton Jump Diffusion model.}
        \label{fig:cond_plots_MJDForward}
\end{figure}
The plots in Figure \ref{fig:cond_plots_MJDForward} show a similar situation as in the BS model, where neither the hedging portfolio (green line) nor the target option price obtained using the worst case probabilities (blue line) aligns with the true price given by the original line. However, the addition of more options ( $\geq8$)  yields a better fit for the target option to the true price than with fewer options ($\leq8)$. 

We can also observe that the plot for subplot (\ref{fig:8option_condplot_mjd_forward}) for a portfolio with $8$ options and stock is identical to plot for \ref{fig:2option_condplot_mjd_forward}) for the portfolio with $2$ options and stock in Figure \ref{fig:cond_plots_MJDForward} indicating that the addition of more options does not necessarily reduce the absolute hedging error of the portfolio for the given choices of the strikes.

Next, we focus our study on the performance of the hedging portfolios generated using the experiments in this section up to the short maturity $t_1$ of the options in the hedging portfolio. 

\section{Results from simulation}
 In this section, we utilize the optimal weights obtained from solving the min-max optimization problem to compute the error statistics of the resulting hedging portfolio at the short maturity $t_1$.
 
 To calculate the Peak Potential Future Exposure (PFE), we simulate $10^4$ stock paths at each of the equispaced time points over the time interval $[0,t_1]$ with a spacing of $h=0.1$. 
 
 The peak $99\textsuperscript{th}$ and $95\textsuperscript{th}$ PFEs are calculated by taking the maximum of the $99\textsuperscript{th}$ and $95\textsuperscript{th}$ percentiles of the hedging error over the equispaced time points. The peak $5\textsuperscript{th}$ and $1\textsuperscript{st}$ PFEs are computed similarly by taking the minima. The hedging error at any time $t\in[0,t_1]$ is given by
\begin{align}
\begin{split}
   &\text{Hedging error at time $t$} \\
    &= \text{Value of target option at time $t$}
    - \text{Value of the hedging portfolio at time $t$}. 
\end{split}
\end{align}
\subsection{Black Scholes Model}
We use the same parameters as used in Section \ref{subsec:BS_exp} to calculate the necessary statistics for the Asian option and the Forward start option.
\subsubsection{Asian Option}

Figure \ref{fig:minmaxvsmae_bsasian} shows the value of the objective function obtained by solving the min-max problem for an increasing number of options in the hedging portfolio. The orange line represents the Mean Absolute Error (MAE) of the hedging portfolio at the short maturity $t_1$, obtained using the stock price simulations. We can observe a sharp decline in both the values as we increase the number of options beyond $5$. The min-max objective gives an upper bound for the mean absolute error.

Figure \ref{fig:peak_pfe_bsasian} gives the peak PFEs of the hedging error for an increasing number of options. We obtain a similar conclusion that there is a significant decrease in the $99\textsuperscript{th}$ and $95\textsuperscript{th}$ percentiles beyond $5$ options in the hedging portfolio, with both the values becoming almost identical beyond $8$ options.

\begin{figure}
    \centering
    \includegraphics[width=0.5\linewidth]{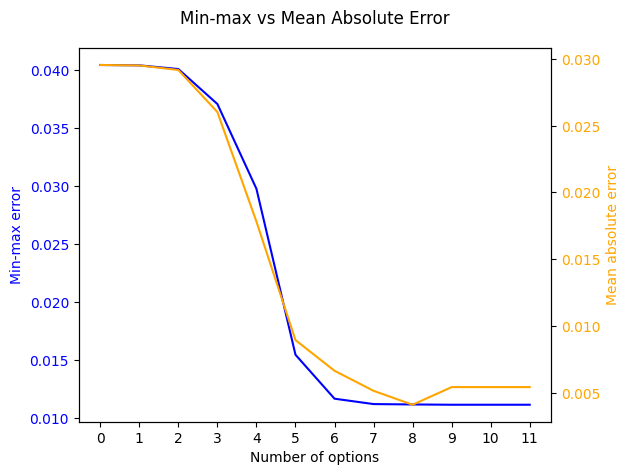}
    \caption{Min-Max versus Mean Absolute Error plot for the Asian option under the Black Scholes model.}
    \label{fig:minmaxvsmae_bsasian}
\end{figure}
\begin{figure}
     \centering
     \begin{subfigure}[b]{0.4\textwidth}
         \centering
         \includegraphics[width=\textwidth]{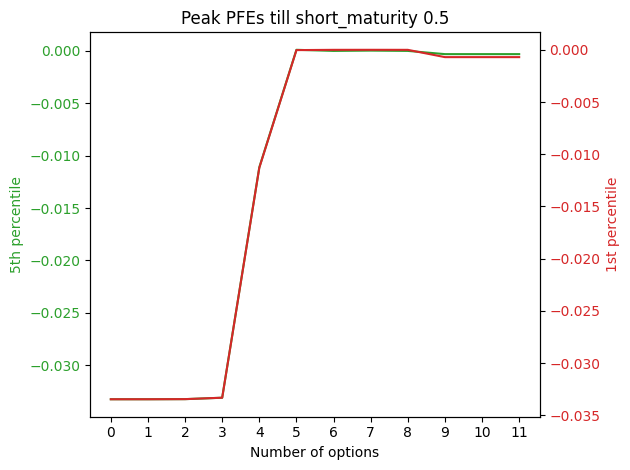}
         \caption{Peak $5\textsuperscript{th}$ and $1\textsuperscript{st}$ PFE }
         \label{fig:5th_and_1st_pfe_bs_asian}
     \end{subfigure}
     \hfill
     \begin{subfigure}[b]{0.4\textwidth}
         \centering
         \includegraphics[width=\textwidth]{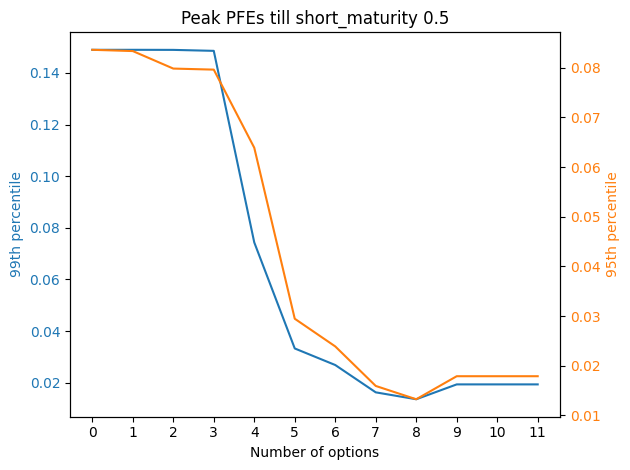}
         \caption{Peak $99\textsuperscript{th}$ and $95\textsuperscript{th}$ PFE }
         \label{fig:99th_and_95th_pfe_bs_asian}
     \end{subfigure}
      \caption{Peak PFE plot for the Asian option under the Black Scholes model.}
    \label{fig:peak_pfe_bsasian}
        
\end{figure}

\subsubsection{Forward Start Option}\label{subsec:forward_BS_Tmaturity_simulations}
Figure \ref{fig:minmaxvsmae_bsforward} shows that the min-max and MAE are marginally reduced by the addition of more options. The min-max error in this case again serves as an upper bound to MAE. 

Figure \ref{fig:peak_pfe_bsforward}, on the contrary, shows a sharp increase in the $99\textsuperscript{th}$ and $95\textsuperscript{th}$ peak PFEs on the addition of more options but a considerable drop in the corresponding $5\textsuperscript{th}$ and $1\textsuperscript{st}$ peak PFEs. \textbf{This indicates an important fact: while the weights we obtain heuristically by solving (\ref{ourLP}) may help minimize the error in the worst possible scenario, they need not be the set of weights corresponding to the least hedging error under the true price dynamics.}

\begin{figure}
    \centering
    \includegraphics[width=0.5\linewidth]{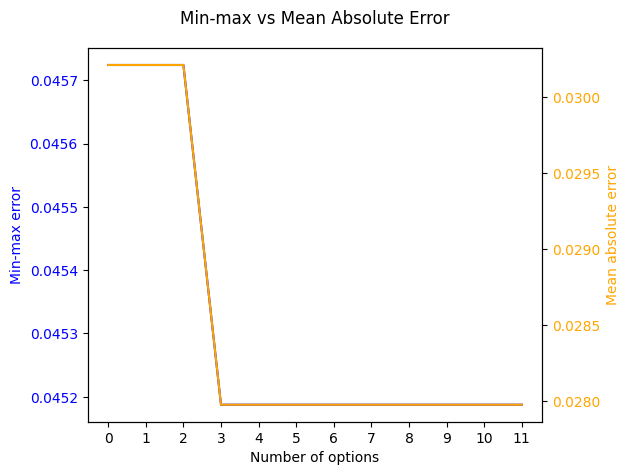}
    \caption{Min-Max versus Mean Absolute Error plot for the Forward start option under the Black Scholes model.}
    \label{fig:minmaxvsmae_bsforward}
\end{figure}
\begin{figure}
    \centering
    \begin{subfigure}[b]{0.4\textwidth}
         \centering
         \includegraphics[width=\textwidth]{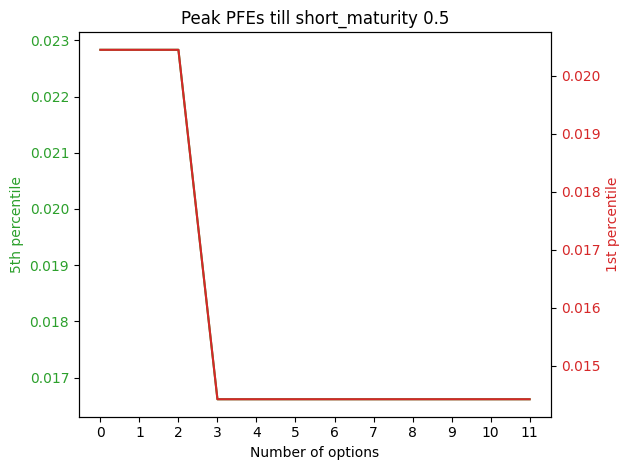}
         \caption{Peak $5\textsuperscript{th}$ and $1\textsuperscript{st}$ PFE }
         \label{fig:5th_and_1st_pfe_bs_forward}
     \end{subfigure}
     \hfill
     \begin{subfigure}[b]{0.4\textwidth}
         \centering
         \includegraphics[width=\textwidth]{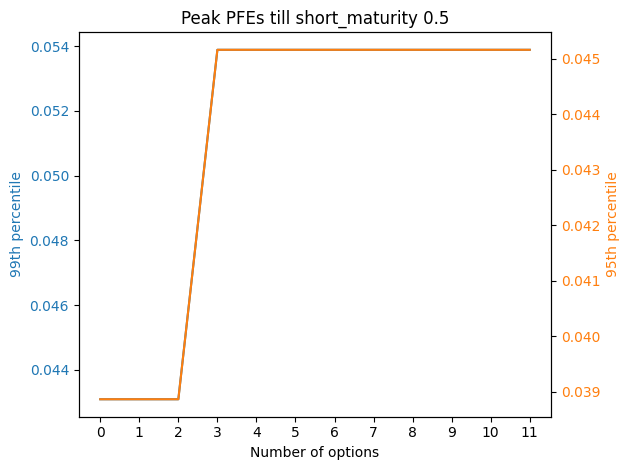}
         \caption{Peak $99\textsuperscript{th}$ and $95\textsuperscript{th}$ PFE }
         \label{fig:99th_and_95th_pfe_bs_forward}
     \end{subfigure}
    \caption{Peak PFE plot for the Forward start option under the Black Scholes model.}
    \label{fig:peak_pfe_bsforward}
\end{figure}

\textbf{Effect of including options with maturity $T$ in the hedging portfolio:}
As an additional experiment, we now include the options with maturity $T$ and let the hedging portfolio comprise cash, stock, and options corresponding to all  $22$ strikes with maturity $t_1$ and $T$ used to calculate the marginal distributions. We compute the weights for the min-max problem in this case at time $t_1$ and compare the corresponding statistics with the performance of the super-hedging portfolio obtained by solving the dual problem to the original MOT option pricing problem.

Figure \ref{fig:stats_dual_versus_minmax} displays the peak PFEs and Mean Absolute Errors of the hedging portfolio obtained using the dual and the min-max problems, respectively. It can be observed that while the $99\textsuperscript{th}$ and $95\textsuperscript{th}$ peak PFEs for the min-max solution are higher than that of the dual solution, the mean absolute errors for both cases are similar. The exact values of the MAE for the min-max hedging portfolio and dual hedging portfolio are $ 0.0126$ and $0.0120$, respectively. The peak $5\textsuperscript{th}$ and $1\textsuperscript{st}$ peak PFEs for both the hedging portfolios are negative, indicating that the hedging portfolios are higher in value than the target option in these scenarios, which is profitable from the option writer's perspective.
\begin{figure}
     \centering
         \centering
         \includegraphics[width=\textwidth]{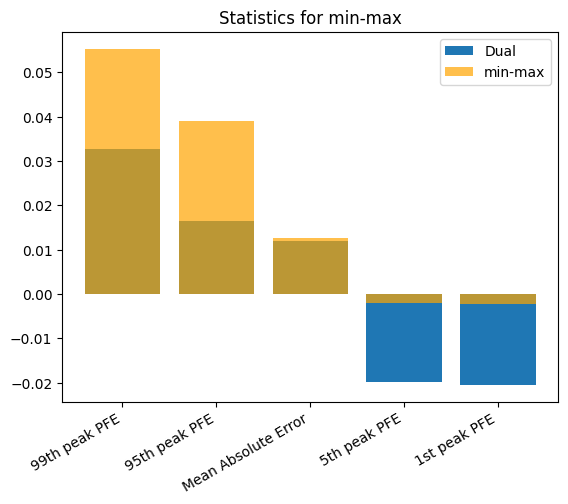}
    
        \caption{Values of the Peak PFEs and the Mean Absolute Errors for the hedging portfolios with options of both maturities $t_1$ and $T$ obtained using the dual and the min-max problems for the Forward option under the Black Scholes model.}
        \label{fig:stats_dual_versus_minmax}
\end{figure}
\begin{table}
    \centering
    \begin{tabular}{|c|c|c|c|}
    \hline
        &$MM_{t_1}$ & $MM_{T}$ & $\text{Dual}_{SH}$    \\
        \hline
       \text{Max}& 0.0452 &  0.0223 & 0.0403\\
       $MAE$ & 0.0280 &  0.0126 & 0.0120\\
         \hline
    \end{tabular}
    \caption{Worst possible errors for the three hedging portfolios for the Forward start option under the Black Scholes model }
     \label{tab:min-maxversusdualBS_Forward}
\end{table}
\subsection{Merton Jump Diffusion Model}
\subsubsection{Forward Start Option}
Figure \ref{fig:minmaxvsmae_MJD_Forward} displays the peak PFEs and Mean Absolute Errors of the hedging portfolio obtained using the dual and the min-max problems, respectively, for an increasing number of options. We observe that the min-max error (denoted by the blue line) is higher than the MAE until $10$ options in the hedging portfolio, providing an upper bound to the MAE as desired.

Figure \ref{fig:peak_pfe_mjdforward} gives the corresponding peak PFEs. The results, especially the $5\textsuperscript{th}$ and $1\textsuperscript{st}$ peak PFEs, again indicate that the hedging portfolio with an increasing number of options corresponding to the solution of the min-max problem need not be the one that reduces the error under the true dynamics. The resulting portfolio would instead minimize the error in the worst possible scenario.

 \footnote{We would also like to highlight an important point that solutions obtained using the min-max algorithm are subject to numerical instabilities based on the choice of the parameters and the type of the optimization algorithm involved. This is beyond the scope of our study and hence we do not discuss it. } 
\begin{figure}
    \centering
    \includegraphics[width=0.5\linewidth]{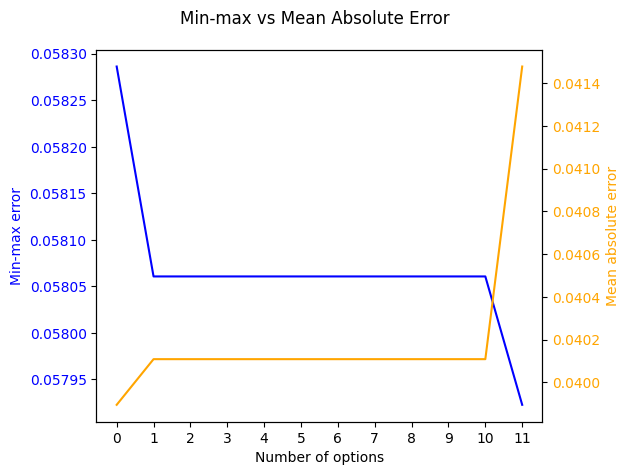}
    \caption{Min-Max versus Mean Absolute Error plot for the Forward option under the Merton jump Diffusion model.}
    \label{fig:minmaxvsmae_MJD_Forward}
\end{figure}
\begin{figure}
     \centering
     \begin{subfigure}[b]{0.4\textwidth}
         \centering
         \includegraphics[width=\textwidth]{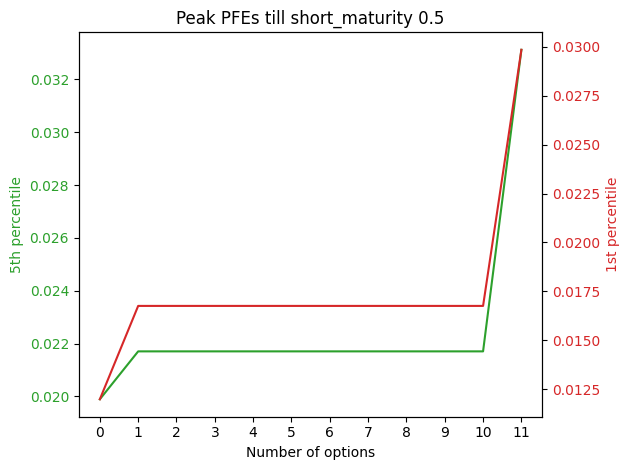}
         \caption{Peak $5\textsuperscript{th}$ and  $1\textsuperscript{st}$ PFE }
         \label{fig:5th_and_1st_pfe_mjd_forward}
     \end{subfigure}
     \hfill
     \begin{subfigure}[b]{0.4\textwidth}
         \centering
         \includegraphics[width=\textwidth]{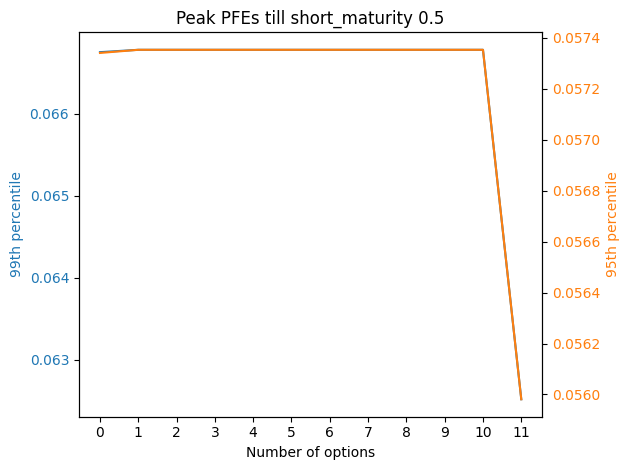}
         \caption{Peak $99\textsuperscript{th}$ and $95\textsuperscript{th}$ PFE }
         \label{fig:99th_and_95th_pfe_mjd_forward}
     \end{subfigure}
      \caption{Peak PFE plot for the Forward start option under the Merton jump diffusion model.}
    \label{fig:peak_pfe_mjdforward}
        
\end{figure}
\section{Conclusion}\label{Sec:Conclusion}
In this paper, we present a model-free approach to hedging options of maturity $T$ whose payoffs depend on the value of an asset at two distinct time points. The hedging portfolio comprises a cash position, the underlying asset, and plain-vanilla options on the same underlying asset of short maturity $t_1$, where $0<t_1<T$. 

We formulate the problem of worst-case absolute hedging errors at the maturity of the short-term maturity options as a min-max optimization problem. The data for the problem consists of a finite number of liquid plain vanilla options at the two maturity points, which allow us to approximate the true marginal distributions. 

The inner maximization problem turns out to be a modified Martingale Optimal Transport problem. The worst-case error refers to the maximization over all martingale measures with the given marginals. A solution to this problem yields a cost-effective portfolio that minimizes the worst-case error at short maturity.

We also derive a theoretical upper bound on the absolute hedging error at the longer maturity $T$ due to the availability of finitely many equally spaced strikes over a bounded support of the true underlying measure.

A wide range of numerical examples, including Asian and Forward start options, under the Black-Scholes and Merton's Jump diffusion model, illustrate the utility of this method. The experiments highlight an important fact: \textbf{the traditional super hedge, while being more expensive than our hedging portfolio, does not necessarily yield the minimum possible worst-case error.} 

During our experiments, we observed that the numerical solutions obtained by solving the min-max problem in the discrete case may not be unique and depend on the choice of the solver. We have used a Gurobi optimization solver for the inner maximization problem and utilised the SLSQP method of scipy.optimize library for the outer minimization problem.

\section{References}
\bibliographystyle{plain} 
\bibliography{references}

\end{document}